\documentclass[12pt]{article}
\usepackage{graphicx}
\usepackage{float}
\usepackage[flushleft]{caption2}
\usepackage{amsmath,amsfonts,amssymb}
\usepackage{amsthm}
\usepackage{mathrsfs}
\usepackage{enumitem}
\usepackage{tikz}
\usepackage[title]{appendix}
\usepackage{indentfirst}
\usepackage{epstopdf}
\usepackage{hyperref}
\hypersetup{hidelinks}

\DeclareMathOperator*{\Res}{Res}
\usepackage[sort&compress,numbers]{natbib}
\usetikzlibrary{arrows}
\newtheorem{lemma}{Lemma}[section]
\newtheorem{theorem}{Theorem}[section]

\newtheorem{proposition}{Proposition}[section]

\newtheorem{assumption}{Assumption}[section]

\setitemize[1]{itemsep=0pt,partopsep=0pt,parsep=\parskip,topsep=5pt}
\numberwithin{equation}{section}
\newtheorem{RHP}{Riemann--Hilbert problem}
\newcommand{\R}{\mathbb{R}}
\newcommand{\Z}{\mathbb{Z}}
\newcommand{\C}{\mathbb{C}}

\def\d{{\rm d}}

\def\i{{\rm i}}

\def\le{\left}
\def\ri{\right}

\usepackage{tikz}
\usetikzlibrary{arrows.meta,decorations.markings}
\usepackage{tikz}
\usetikzlibrary{shapes,snakes,calc,fit}
\usetikzlibrary{decorations.text}
\usepgfmodule{shapes}
\usetikzlibrary{decorations.pathmorphing}
\usetikzlibrary{decorations.markings}
\usetikzlibrary{patterns}
\usetikzlibrary{automata}
\usetikzlibrary{positioning}
\usetikzlibrary{spy}
\usetikzlibrary{backgrounds}
\tikzset{partial ellipse/.style args={#1:#2:#3}{insert path={+ (#1:#3) arc (#1:#2:#3)} }}
\tikzset{->-/.style={decoration={ markings, mark=at position #1 with {\arrow{>}}},postaction={decorate}}}
\tikzset{-<-/.style={decoration={ markings, mark=at position #1 with {\arrow{<}}},postaction={decorate}}}

\title{\LARGE\bf Large-time asymptotics of a new KdV soliton gas}
\author{\hspace{0.6 cm}{Dedi Yan$^{a}$, Xianguo Geng$^{a,b}$, Jiao Wei$^{a}$\footnote{\footnotesize
			Corresponding author. {\sl Email address}: math.jwei@zzu.edu.cn}}\\
	\leftline{\hspace{0.6 cm}{\small{\sl $^{a}$ School of Mathematics and Statistics, Zhengzhou University, 100 Kexue Road, Zhengzhou, }}}\\
	\leftline{\hspace{0.6 cm}{\small{\sl \quad Henan 450001, People's Republic of China}}}\\
	\leftline{\hspace{0.6 cm}{\small{\sl $^{b}$ Institute of Mathematics, Henan Academy of Sciences, Zhengzhou, Henan 450046,}}}\\
	\leftline{\hspace{0.6 cm}{\small{\sl \quad People's Republic of China}}}}
\date{}
\begin{document}
	\maketitle
	\begin{abstract}
We study the large-time asymptotic behavior of a new KdV soliton gas. We first introduce a pure-soliton Riemann--Hilbert(RH) problem with \(2N\) poles and two different types of residue conditions. We show that, as \(N\to\infty\), this discrete problem converges to  primitive-potential RH problem introduced by Dyachenko, Zakharov, and Zakharov, and the jump matrix of this soliton gas RH problem has two nonzero reflection coefficients.
To analyze the large-time behavior, we apply the Deift--Zhou nonlinear steepest descent method together with an appropriate \(g\)-function mechanism. Through a sequence of transformations, the original RH problem is reduced to explicitly solvable model problems on an associated hyperelliptic Riemann surface. This allows us to derive an explicit leading-order asymptotic formula for the solution in terms of Jacobi elliptic function. The result provides a rigorous asymptotic description of  a new KdV soliton gas and extends the available analysis beyond the previously studied case \(r_2\equiv 0\).\\

\noindent{\rm Keywords:} KdV soliton gas, large-time asymptotics, Riemann--Hilbert problem\\
\noindent{\rm Mathematics Subject Classification:} 35Q15, 35B40
	\end{abstract}

\section{Introduction}
The concept of soliton gas, representing the infinite statistical ensemble of interacting solitons, was initially introduced by Zakharov \cite{Zak71}. Building on this foundation, El et al. extended Zakharov's model to describe a dense Korteweg–de Vries (KdV)  soliton gas and breather gases for the focusing nonlinear Schr\"{o}dinger equation by employing spectral theory to analyze the thermodynamic limit of finite-gap solutions \cite{E1,E2,E3,E4,TW2022}. In recent years, soliton gases have attracted considerable attention, in part because their dynamics are closely connected with a number of fundamental nonlinear wave phenomena, including spontaneous modulational instability and rogue-wave formation \cite{SRADE2024,GA,GAZ}.

For the KdV equation,
\begin{align}\label{KdV}
	u_{t}+u_{xxx}-6uu_{x}=0,\ \ -\infty<x<+\infty,\ t>0,
\end{align}
classical theory provides two major exactly solvable settings: rapidly decaying data, treated by the inverse scattering transform \cite{GGKM}, and periodic or quasi-periodic finite-gap solutions, described by algebro-geometric methods \cite{Gesztesy,GengZhaiDai,WeiGeng}. A central difficulty arises when one considers bounded data that are neither decaying nor periodic. To construct new and broad families of solutions for integrable nonlinear PDEs,
the dressing method developed by Zakharov and Manakov has led to several notable results \cite{ZM}. In \cite{DZZ}, that approach produces a new class of solutions of the  KdV equation, called
\emph{primitive potentials},  which have the same spectral structure as periodic finite-gap potentials, but that are neither periodic nor
quasi-periodic. The primitive potentials, although not random, can be naturally
interpreted as  a useful model for integrable turbulence and  soliton gas.
These potentials are described by a pair of positive H\"older-continuous functions on the allowed bands. More precisely, the construction is formulated through a Riemann-Hilbert (RH) problem
for a vector $
\Phi=
(
	\Phi_1\
	\Phi_2
)^{\top},
$
which is normalized at infinity and satisfies the jump conditions
\begin{align}
\Phi_-(iz)=J(z)\Phi_+(iz),\qquad
\Phi_-(-iz)=J(z)^{\top}\Phi_+(-iz),\qquad z\in(a_1,b_1),
\end{align}
where $\Phi_+$ and $\Phi_-$ denote the left and right
values of $\Phi$ on the contours.
Here the jump matrix is
\begin{align}
J(z)=\frac{1}{1+r_1(z)r_2(z)}
\begin{pmatrix}
	1-r_1(z)r_2(z) & 2i\,r_1(z)e^{-2zx}\\
	2i\,r_2(z)e^{2zx} & 1-r_1(z)r_2(z)
\end{pmatrix}.
\end{align}
The parameters \(a_1\) and \(b_1\) are real and satisfy
$
0<a_1<b_1.
$ The reflection coefficients \(r_1(z;t)\) and \(r_2(z;t)\) evolve in time according to
\[
r_1(z;t)=r_1(z;0)e^{8z^3t},\qquad
r_2(z;t)=r_2(z;0)e^{-8z^3t}.
\]
Then Nabelek prove that all algebro-geometric finite-gap solutions of the KdV equation can be realized within the primitive potentials framework\cite{Nab}. Grava, Jenkins,Zhang and Zhang developed the direct and inverse scattering theory for the focusing NLS equation with initial data asymptotic to an elliptic background, and further related this class of data to full soliton gas initial data \cite{GJZZ2}.
Recently, Girotti et al. conducted a comprehensive study on the large-space and large-time asymptotics of the KdV soliton gas in the case $r_2\equiv 0$ \cite{Girotti-1}. By contrast, the corresponding large-time asymptotic problem in the  case \(r_1\neq 0\) and \(r_2\neq 0\) appears to be considerably more difficult and, to the best of our knowledge, remains open.
The purpose of the present paper is to address this problem.  We consider jump contours of the form
$i(a_1,b_1)\cup i(-b_1,-a_1)$
and allow the jump matrix on each contour component to involve two nonzero coefficients \(\hat r_1(k)\) and \(\hat \rho_1(k)\). Our starting point is a pure-soliton RH problem with \(2N\) poles and two different types of residue conditions, where \(N\in\mathbb N_+\). We then show that, as \(N\to\infty\), this problem converges to a continuous RH problem introduced by \cite{DZZ}. After this continuum limit is established, we analyze the resulting RH problem by means of the Deift--Zhou nonlinear steepest descent method \cite{DeiftZ,spin}, combined with a suitable $g$-function mechanism \cite{BWYZ,DeiftItsZhou,Girotti-2,GYJ,BL,HXF,Wang,TM,Fan,Ling,Yan,odd,fnls}. In this way, we derive an explicit large-time asymptotic formula for the new KdV soliton gas.

Our main result is the following theorem.
\begin{theorem}[Large-time asymptotics in the three regions]\label{thm-main-asymptotics}
Assume that $r_1(k)$ and $\rho_1(k)$ are positive and nonvanishing on $i[a_1,b_1]$, where $0<a_1<b_1$, and admit analytic extensions to a neighbourhood of $i[a_1,b_1]$ (cf.~Assumption~\ref{assumption1}).
Then, as $t\to+\infty$, the KdV potential $u(x,t)$ admits the following asymptotic descriptions.
	
	\begin{enumerate}
		\item[(i)] \textbf{Supercritical region:} If $\xi>\xi_*$, where $\xi_{*}$  is defined by \eqref{xicrit}, then
		\begin{equation}\label{eq:uasy-super}
			u(x,t)=b_1^2-a_1^2
			-2b_1^2\,\operatorname{dn}^2\!\left(
			b_1\bigl(x-2(a_1^2+b_1^2)t+\phi_{\mathrm{sup}}\bigr)+K(m)
			\,\middle|\, m
			\right)
			+\mathcal{O}(t^{-1}),
		\end{equation}
		where \(K(m)\) denotes the complete elliptic integrals of the first  kinds, $\operatorname{dn}(z|m)$ is the Jacobi elliptic function of modulus $m=\frac{a_1}{b_1}$, and the the phase shift is
		\begin{equation}\label{eq:phi-super}
			\phi_{\mathrm{sup}}
			=
			\frac{i}{\pi}\int_{a_1}^{b_1}
			\frac{\log \dfrac{2\hat{\rho}(\zeta)}{1+\hat{r}(\zeta)\hat{\rho}(\zeta)}}{R_+(\zeta)}\,d\zeta.
		\end{equation}
		
		\item[(ii)] \textbf{Intermediate region:} If $\xi_{\mathrm{crit}}<\xi<\xi_*$, where $\xi_{\mathrm{crit}}$  is defined by \eqref{xicrit}, then
		\begin{equation}\label{eq:uasy-mid}
			u(x,t)=b_1^2-a_1^2
			-2b_1^2\,\operatorname{dn}^2\!\left(
			b_1\bigl(x-2(a_1^2+b_1^2)t+\phi_{\mathrm{mid}}\bigr)+K(m)
			\,\middle|\, m
			\right)
			+\mathcal{O}(t^{-1/2}),
		\end{equation}
		where
		\begin{equation}\label{eq:phi-mid}
			\phi_{\mathrm{mid}}
			=
			\frac{i}{\pi}
			\left(
			\int_{a_1}^{\tilde\alpha}
			\frac{\log \dfrac{2\hat\rho(\zeta)}{1+\hat r(\zeta)\hat\rho(\zeta)}}{R_+(\zeta)}\,d\zeta
			-
			\int_{\tilde\alpha}^{b_1}
			\frac{\log \dfrac{2\hat r(\zeta)}{1+\hat r(\zeta)\hat\rho(\zeta)}}{R_+(\zeta)}\,d\zeta
			\right).
		\end{equation}
		
		\item[(iii)] \textbf{Subcritical region:} If $\xi<\xi_{\mathrm{crit}}$, then
		\begin{equation}\label{eq:uasy-sub}
			u(x,t)=b_1^2-a_1^2
			-2b_1^2\,\operatorname{dn}^2\!\left(
			b_1\bigl(x-2(a_1^2+b_1^2)t+\phi_{\mathrm{sub}}\bigr)+K(m)
			\,\middle|\, m
			\right)
			+\mathcal{O}(t^{-1}),
		\end{equation}
		where
		\begin{equation}\label{eq:phi-sub}
			\phi_{\mathrm{sub}}
			=
			-\frac{i}{\pi}
			\int_{a_1}^{b_1}
			\frac{\log \dfrac{2\hat r(\zeta)}{1+\hat r(\zeta)\hat\rho(\zeta)}}{R_+(\zeta)}\,d\zeta.
		\end{equation}
	\end{enumerate}
\end{theorem}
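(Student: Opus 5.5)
We follow the Deift--Zhou nonlinear steepest descent scheme, starting from the continuum-limit RH problem for $M(k;x,t)$. Its jumps lie on $i(a_1,b_1)\cup i(-b_1,-a_1)$ and carry the two coefficients $\hat r$ and $\hat\rho$. We write $\xi=x/t$ and take the phase $\theta(k;\xi)=k\xi+4k^3$, so that the off-diagonal entries of the jump carry $e^{\pm 2it\theta}$. On the upper cut, one of $e^{\pm2it\theta}$ is exponentially large for large $t$. The first step is to make the jumps uniformly bounded in $t$. We do this by factorizing the jump matrix on each sub-interval where a given exponential dominates, using either the lower--upper or the upper--lower triangular form. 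The factorization is chosen so that the large exponential sits in the triangular factors and the diagonal carries $1+\hat r\hat\rho$ (or $2\hat r$, $2\hat\rho$). These pieces are then removed with scalar functions solving scalar RH problems on the cut. This produces the logarithms $\log\frac{2\hat\rho}{1+\hat r\hat\rho}$ and $\log\frac{2\hat r}{1+\hat r\hat\rho}$ that appear in \eqref{eq:phi-super}--\eqref{eq:phi-sub}.

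Next we introduce the $g$-function. We look for $g(k)$, analytic off $i[-b_1,b_1]$, with $g=\theta+\mathcal O(k^{-1})$ at infinity. On the bands it should satisfy $g_++g_-=0$, and on $i(-a_1,a_1)$ it should satisfy $g_+-g_-=$ const. We take $g'$ of the form $\frac{12(k^2+a_1^2)(k^2+c^2)}{R(k)}$ with $R=\sqrt{(k^2+a_1^2)(k^2+b_1^2)}$, where $c$ is fixed by the vanishing-period condition. This is exactly what yields the genus-one surface with branch points $\pm ia_1,\pm ib_1$, independently of $\xi$. Only the sign structure of $\operatorname{Im}(g_+-g_-)$ depends on $\xi$, and this is what splits the plane into regions. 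For $\xi>\xi_*$ one factorization is valid on the whole band; for $\xi<\xi_{\rm crit}$ the opposite one is. In the intermediate region the band splits at a point $\tilde\alpha\in(a_1,b_1)$, defined by a stationary-phase condition on $g$, and the two factorizations switch there.

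We then open lenses around $i(a_1,b_1)$ and its reflection. By the sign analysis, the triangular factors become exponentially close to the identity off the cuts, and the remaining jumps are constant off-diagonal matrices on the bands together with a phase jump $e^{it\Omega}$ on the gap. The outer model problem is solved by theta functions on the elliptic curve, with phase $t\Omega+\Delta$, where $\Delta$ collects the $\int\log(\cdot)/R_+$ terms. We then read off $u=2i\partial_x M^{(1)}_{11}$, which gives a $\theta$-quotient. The Its--Matveev formula converts it into the $\operatorname{dn}^2$ form with modulus $a_1/b_1$, velocity $2(a_1^2+b_1^2)$ and shift $\phi$. At the branch points we place Airy-free local parametrices: since $\hat r,\hat\rho$ do not vanish, the endpoint behaviour is of square-root/Bessel type with $\log$-bounded Szeg\H{o} factors, and a standard construction gives an $\mathcal O(t^{-1})$ error in the regions $\xi>\xi_*$ and $\xi<\xi_{\rm crit}$.

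The main obstacle is the intermediate region. At $\tilde\alpha$ the scalar Szeg\H{o} functions have a jump discontinuity, because their integrand switches from $\log\frac{2\hat\rho}{1+\hat r\hat\rho}$ to $-\log\frac{2\hat r}{1+\hat r\hat\rho}$. Consequently the local problem near $i\tilde\alpha$ is a Fisher--Hartwig type singularity coupled to a quadratic stationary point of the phase. I would first try a parabolic-cylinder local parametrix with a $t$-dependent power singularity. Matching it to the outer solution gives a mismatch of order $t^{-1/2}$, which accounts for the weaker error in \eqref{eq:uasy-mid}. Small-norm estimates for the final error RH problem then complete all three cases.
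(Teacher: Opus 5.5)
Your overall architecture matches the paper's: a $\xi$-independent genus-one surface with branch points $\pm a_1,\pm b_1$, two factorizations that switch at $\tilde\alpha$, scalar functions producing the two logarithms, a theta-function outer parametrix, Bessel endpoints, and a parabolic-cylinder parametrix at $\pm\tilde\alpha$ giving the $t^{-1/2}$ error. But the one ingredient you make explicit, the $g$-function, is wrong, and it is exactly what separates the three regions.

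Your ansatz $g'=12(k^2+a_1^2)(k^2+c^2)/R$ has one free constant but must meet two independent conditions.
\begin{itemize}
\item Matching at infinity: $g=\theta+\mathcal O(k^{-1})$ forces the constant term $12(a_1^2+c^2)-6(a_1^2+b_1^2)$ of $g'$ to equal $\xi$, so $c^2$ is affine in $\xi$.
\item Vanishing period: this condition on the gap is linear in $c^2$ with $\xi$-independent coefficients, so it pins $c^2$ to a fixed value.
\end{itemize}
Both hold for at most one $\xi$. If $c$ is fixed by the period condition, as you propose, then $g$ does not depend on $\xi$ and cannot match $\theta(k;\xi)$ at infinity.

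In the paper (rotated variable, $F=g+\hat\theta$) the effective phase is $F'=(12Q_2-4\xi Q_1)/R=12(k^2-k_1^2)(k^2-\alpha^2)/R$. It has \emph{two} constants $c_1,c_2$, fixed by \eqref{moments}, so both zeros move with $\xi$. The larger one, $\alpha(\xi)^2$, lies above $b_1^2$, in $(a_1^2,b_1^2)$, or below $a_1^2$ according as $\xi>\xi_*$, $\xi_{\mathrm{crit}}<\xi<\xi_*$, or $\xi<\xi_{\mathrm{crit}}$; the other zero stays below $a_1^2$. This structure drives everything downstream:
\begin{itemize}
\item the thresholds \eqref{xicrit} are precisely where $\alpha(\xi)$ crosses $b_1$ and $a_1$ (Proposition~\ref{prop:q-roots});
\item Lemma~\ref{lemma3.1} reads the lens signs off the sign of $(k^2-k_1^2)(k^2-\alpha^2)$ on the band;
\item $\tilde\alpha$ is $\alpha(\xi)$ in the intermediate region;
\item the gap constant $\Omega$, and hence the velocity $2(a_1^2+b_1^2)$, comes from this $g$.
\end{itemize}
Your pinned zero at $\pm ia_1$ gives the correct phase only at $\xi=\xi_{\mathrm{crit}}$, where $q(a_1^2;\xi_{\mathrm{crit}})=0$. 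There the phase vanishes like $(k-ia_1)^{3/2}$, which would require Airy parametrices rather than the Bessel ones you invoke. As written, your argument has no mechanism that produces $\xi_*$, $\xi_{\mathrm{crit}}$ or $\tilde\alpha$.

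A smaller issue is your first step. On the cuts $e^{\pm 2it\theta}$ is a real exponential, so no triangular factorization by itself makes the jumps bounded in $t$. Moreover, a lower--upper or upper--lower split with a diagonal middle factor puts $\frac{1-\hat r\hat\rho}{1+\hat r\hat\rho}$ on the diagonal. That can vanish, since nothing in Assumption~\ref{assumption1} excludes $\hat r\hat\rho=1$, and it does not lead to a constant band jump. What is needed is the triangular--antidiagonal--triangular factorizations \eqref{factor11}--\eqref{factor22}, used together with the $g$-function:
\begin{itemize}
\item after conjugation by $e^{tg\sigma_3}$ the antidiagonal factor becomes $t$-independent;
\item the scalar function $f$ (or $\tilde f$) then reduces it to $\mp i\sigma_1$ on $\Sigma_{\pm1}$;
\item the triangular factors carry $e^{\pm 2t(g+\hat\theta)}$, which decay on the lenses by Lemma~\ref{lemma3.1}.
\end{itemize}
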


The paper is organized as follows. In Section \ref{sec2}, we formulate a pure-soliton RH problem for the KdV equation with two types of residue conditions, whose discrete spectrum is supported on
$
i(-b_1,-a_1)\cup i(a_1,b_1).
$
By passing to the limit \(N\to+\infty\), we obtain the RH problem for a new KdV soliton gas, in which the jump matrix carries two nonzero reflection coefficients on each contour. In Section \ref{sec3}, we construct the relevant \(g\)-function and perform a sequence of transformations reducing the original RH problem to explicitly solvable model problems. These model problems are solved in terms of Abelian integrals and Jacobi elliptic function on the associated hyperelliptic Riemann surface. This yields the theorem \ref{thm-main-asymptotics}.

\section{A RH characterization of the KdV equation}\label{sec2}
	The KdV equation \eqref{KdV} admits the Lax pair
	\begin{equation}\label{lax}
		\begin{aligned}
			&-\tilde{\Phi}_{xx}+u\tilde{\Phi}=k\tilde{\Phi},\\
			&\tilde{\Phi}_t-4\tilde{\Phi}_{xxx}+6u\tilde{\Phi}_x+3u_x\tilde{\Phi}=0,
		\end{aligned}
	\end{equation}
where $\tilde{\Phi}=\tilde{\Phi}(k;x,t)$  and the spectral parameter  $k \in\C$.
The RH problem for the pure soliton solution of the KdV equation (see example \cite{Girotti-1,Wang}) is described as follows:
\begin{RHP}\label{RHP1}
Find  a $1\times2$ vector-valued function $M(k;x,t)$ with the following properties
\begin{enumerate}
\item $M(k;x,t)$ is meromorphic in $\mathbb{C}$, with simple poles at $i\kappa_{j}, iz_j $ in $i\mathbb{R}_{+}$, and at the corresponding conjugate points $-i\kappa_{j}, -iz_j $ in $i\mathbb{R}_{-}$,\ ${j=1,2,\ldots,N},\   N\in \Z_+$.
			
\item $M(k;x,t)$ satisfies the residue conditions
		\begin{equation}
			\begin{aligned}\label{residue_soliton}
\Res\limits_{k=i\kappa_{j}}{M}(k)&=\lim_{k\to i\kappa_{j}}{M}(k)\begin{pmatrix}0&0\\i\chi_{j}e^{2i\theta(k;x,t)}&0\\\end{pmatrix},\\
\Res\limits_{k=iz_{j}}{M}(k)&=\lim_{k\to iz_{j}}{M}(k)\begin{pmatrix}0&i\mu_{j}e^{-2i\theta(k;x,t)}\\ 0&0\\\end{pmatrix},\\
\Res\limits_{k=-i\kappa_j}{M}(k)&
=\lim_{k\to-i\kappa_j}{M}(k)\begin{pmatrix}0&-i\chi_je^{-2i\theta(k;x,t)}\\0&0\end{pmatrix},\\
\Res\limits_{k=-iz_j}{M}(k)&
=\lim_{k\to-iz_j}{M}(k)\begin{pmatrix}0&0\\ -i\mu_je^{2i\theta(k;x,t)}&0\end{pmatrix},
			\end{aligned}
		\end{equation}
		where $\theta(k;x,t)=4tk^3+xk$ and $\chi_{j}, \mu_j$ are nonzero, positive and real constant.
\item $M(k;x,t)\rightarrow \begin{pmatrix}
	1&1
\end{pmatrix} ,\  k\rightarrow\infty$.
\item $M(k;x,t)$ satisfies the symmetry
		\[
		\ M(-k)=M(k)\begin{pmatrix}
			0&1\\1&0\end{pmatrix}\ .
		\]
\end{enumerate}
\end{RHP}
The potential $u(x,t)$ is determined from $M(k;x,t)$ via
	\begin{equation}
		u(x,t)=2\dfrac{\d}{\d x}\lim\limits_{k\to\infty}\le(\frac{k}{i} (M_1(k;x,t)-1)\ri),
	\end{equation}
	where $M_1(k)$ is the first entry of the vector $M(k)$.
\begin{assumption}\label{assumption1}
	As $N\to\infty$, we assume that:
	\begin{enumerate}
		\item
		The poles $\{i\kappa_j\}_{j=1}^N$ and $\{iz_j\}_{j=1}^N$
		are uniformly distributed and interlaced on $i[a_1,b_1]$, where
		$0<a_1<b_1$, namely
		\[
		\kappa_j=a_1+j\frac{b_1-a_1}{N},
		\qquad
		z_j=a_1+\left(j-\frac12\right)\frac{b_1-a_1}{N},
		\qquad j=1,\dots,N.
		\]
		
		\item
		The coefficients $\{i\chi_j\}_{j=1}^N$ and $\{i\mu_j\}_{j=1}^N$
		are purely imaginary and are discretizations of two given analytic functions:
		\begin{gather}
			i\chi_j
			=
			i\frac{b_1-a_1}{2N}r_1(i\kappa_j)
			\prod_{\substack{m=1\\ m\neq j}}^N
			\frac{\kappa_j-z_m}{\kappa_j-\kappa_m},
			\qquad j=1,\dots,N,
			\\
			i\mu_j
			=
			i\frac{b_1-a_1}{2N}\rho_1(iz_j)
			\prod_{\substack{m=1\\ m\neq j}}^N
			\frac{z_j-\kappa_m}{z_j-z_m},
			\qquad j=1,\dots,N,
		\end{gather}
		where $r_1(k)$ and $\rho_1(k)$ are analytic function in a neighborhood of the intervals
		$i[a_1,b_1]\cup i[-b_1,-a_1]$, satisfy
		\[
		r_1(k)\rho_1(k)< 1,
		\qquad
		r_1(-k)=r_1(k),
		\qquad
		\rho_1(-k)=\rho_1(k),
		\]
		and is further assumed to be a real valued positive
		and non-vanishing function of $k$ for $k\in i[a_1,b_1]$.
	\end{enumerate}
\end{assumption}
	We introduce the closed curves $\Gamma_{1+}$ which located in the upper half-plane $\mathbb{C}_{+}$, encircling the interval $i[a_1,b_1]$ in a counterclockwise direction. Similarly, $\Gamma_{1-}$ are the counterclockwise curves that surrounds the interval $i[-b_1,-a_1]$ in the lower half-plane $\mathbb{C}_{-}$.
We first remove the poles by defining
\begin{equation}\label{Zdef}
	Z(k)=M(k)
	\begin{cases}
		\begin{pmatrix}
			1&0\\
			-r_1(k)e^{2i\theta(k)}\displaystyle\prod_{j=1}^N\frac{k-iz_j}{k-i\kappa_j}&1
		\end{pmatrix}
		\begin{pmatrix}
			1&\rho_1(k)e^{-2i\theta(k)}\displaystyle\prod_{j=1}^N\frac{k-i\kappa_j}{k-iz_j}\\
			0&1
		\end{pmatrix},
		& k \text{ inside }\Gamma_{1+},
		\\[4ex]
		\begin{pmatrix}
			1&-r_1(k)e^{-2i\theta(k)}\displaystyle\prod_{j=1}^N\frac{k+iz_j}{k+i\kappa_j}\\
			0&1
		\end{pmatrix}
		\begin{pmatrix}
			1&0\\
			\rho_1(k)e^{2i\theta(k)}\displaystyle\prod_{j=1}^N\frac{k+i\kappa_j}{k+iz_j}&1
		\end{pmatrix},
		& k \text{ inside }\Gamma_{1-},
		\\[3ex]
		I,
		& \text{otherwise},
	\end{cases}
\end{equation}
where $I$ is the $2\times 2$ identity matrix.

With this choice, all poles of $M$ are removed, and the $1\times 2$ row-vector-valued function $Z(k;x,t)$ satisfies the jump relation
\begin{equation}\label{Zjump-discrete}
	Z_+(k)=Z_-(k)
	\begin{cases}
		\begin{pmatrix}
			1&0\\
			-r_1(k)e^{2i\theta(k)}\displaystyle\prod_{j=1}^N\frac{k-iz_j}{k-i\kappa_j}&1
		\end{pmatrix}
		\begin{pmatrix}
			1&\rho_1(k)e^{-2i\theta(k)}\displaystyle\prod_{j=1}^N\frac{k-i\kappa_j}{k-iz_j}\\
			0&1
		\end{pmatrix},
		& k\in \Gamma_{1+},
		\\[4ex]
		\begin{pmatrix}
			1&-r_1(k)e^{-2i\theta(k)}\displaystyle\prod_{j=1}^N\frac{k+iz_j}{k+i\kappa_j}\\
			0&1
		\end{pmatrix}
		\begin{pmatrix}
			1&0\\
			\rho_1(k)e^{2i\theta(k)}\displaystyle\prod_{j=1}^N\frac{k+i\kappa_j}{k+iz_j}&1
		\end{pmatrix},
		& k\in \Gamma_{1-}.
	\end{cases}
\end{equation}
Here, on $\Gamma_{1+}\cup\Gamma_{1-}$, the boundary value $Z_+(k)$ is taken from the left side of the contour and $Z_-(k)$ from the right side.

To pass to the continuum limit, set
\[
B_N(k):=\prod_{j=1}^N\frac{k-iz_j}{k-i\kappa_j},
\qquad
\widetilde B_N(k):=\prod_{j=1}^N\frac{k+iz_j}{k+i\kappa_j}.
\]

\begin{proposition}\label{prop:BNlimit}
	As $N\to\infty$, one has
	\begin{gather}
		B_N(k)\longrightarrow \beta(k):=
		\left(\frac{k-ia_1}{k-ib_1}\right)^{1/2},
		\qquad
		k\in \C\setminus i[a_1,b_1],\label{limit1-new}
		\\
		\widetilde B_N(k)\longrightarrow \widetilde\beta(k):=
		\left(\frac{k+ia_1}{k+ib_1}\right)^{1/2},
		\qquad
		k\in \C\setminus i[-b_1,-a_1],\label{limit2-new}
	\end{gather}
	uniformly on compact subsets of the indicated domains. The branches of $\beta$ and $\widetilde\beta$ are fixed by the normalization
	\[
	\beta(k)\to 1,
	\qquad
	\widetilde\beta(k)\to 1,
	\qquad
	k\to\infty.
	\]
\end{proposition}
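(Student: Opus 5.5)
We take logarithms and recognize a Riemann sum. Fix a compact set $K\subset \C\setminus i[a_1,b_1]$. On $K$ each factor $\frac{k-iz_j}{k-i\kappa_j}=1+\frac{i(\kappa_j-z_j)}{k-i\kappa_j}$ is close to $1$, since $\kappa_j-z_j=\frac{b_1-a_1}{2N}$ and $|k-i\kappa_j|\ge \operatorname{dist}(K,i[a_1,b_1])=:\delta>0$. Hence the principal logarithm is defined for $N$ large, and
\[
\log B_N(k)=\sum_{j=1}^N \log\Bigl(1+\frac{i h}{2(k-i\kappa_j)}\Bigr),\qquad h:=\frac{b_1-a_1}{N}.
\]
Expanding $\log(1+w)=w+\mathcal O(|w|^2)$ uniformly for $|w|\le h/(2\delta)$ gives
\[
\log B_N(k)=\frac{1}{2}\sum_{j=1}^N \frac{i h}{k-i\kappa_j}+\mathcal O\bigl(N\cdot h^2\delta^{-2}\bigr)=\frac12\sum_{j=1}^N \frac{ih}{k-i\kappa_j}+\mathcal O(N^{-1}).
\]
The error is uniform on $K$.

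The remaining sum is a right-endpoint Riemann sum for $\frac12\int_{a_1}^{b_1}\frac{i\,ds}{k-is}$. The integrand $s\mapsto (k-is)^{-1}$ is smooth on $[a_1,b_1]$, with derivative bounded by $\delta^{-2}$ uniformly for $k\in K$. So the Riemann sum differs from the integral by $\mathcal O(h\,\delta^{-2}(b_1-a_1))=\mathcal O(N^{-1})$, again uniformly on $K$. Computing,
\[
\int_{a_1}^{b_1}\frac{i\,ds}{k-is}=\bigl[-\log(k-is)\bigr]_{s=a_1}^{s=b_1}=\log\frac{k-ia_1}{k-ib_1},
\]
where the logarithm is the branch analytic in $\C\setminus i[a_1,b_1]$ that vanishes at $\infty$. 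This is well defined because $\frac{k-ia_1}{k-ib_1}$ is nonvanishing there and has winding number zero around the cut. Exponentiating gives $B_N(k)=\beta(k)\,e^{\mathcal O(1/N)}\to\beta(k)$ uniformly on $K$. Here $\beta=\exp\bigl(\tfrac12\log\frac{k-ia_1}{k-ib_1}\bigr)$ is exactly the branch normalized by $\beta(\infty)=1$.

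The only delicate point is consistency of branches. The sum of principal logarithms $\sum_j\log(\cdot)$ is a single-valued analytic function on $\C\setminus i[a_1,b_1]$ tending to $0$ at $\infty$, because each term does. The integral expression has the same property. So no $2\pi i$ ambiguity arises, and we identify the limit with the stated normalization. Formally, we compare the two analytic functions $\log B_N$ and $\frac12\log\frac{k-ia_1}{k-ib_1}$. They differ by $o(1)$ near infinity, and the estimates above hold on any compact set, including a large circle, so the identification holds on all of $K$ by connectedness of $\C\setminus i[a_1,b_1]$.

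The statement for $\widetilde B_N$ follows by the substitution $k\mapsto -k$. Indeed, $\widetilde B_N(k)=\prod_j\frac{-k-iz_j}{-k-i\kappa_j}=B_N(-k)$. Therefore $\widetilde B_N(k)\to\beta(-k)=\bigl(\frac{k+ia_1}{k+ib_1}\bigr)^{1/2}=\widetilde\beta(k)$ uniformly on compact subsets of $\C\setminus i[-b_1,-a_1]$, with normalization $1$ at infinity preserved.
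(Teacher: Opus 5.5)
Your proof is correct and follows the same route as the paper. Both take logarithms, linearize via $\log(1+w)=w+\mathcal{O}(w^2)$ with total error $\mathcal{O}(Nh^2)=\mathcal{O}(1/N)$, and recognize the remaining sum as a Riemann sum for $\frac12\int_{a_1}^{b_1}\frac{i\,ds}{k-is}=\frac12\log\frac{k-ia_1}{k-ib_1}$; your explicit uniform bounds, the branch check, and the reduction $\widetilde B_N(k)=B_N(-k)$ for the second limit fill in details the paper leaves implicit (it only says that case is ``analogous'').
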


\begin{proof}
	Let
	\[
	\delta=\frac{b_1-a_1}{N},
	\qquad
	\kappa_j=a_1+j\delta,
	\qquad
	z_j=a_1+\left(j-\frac12\right)\delta.
	\]
	Then
	\[
	B_N(k)=\prod_{j=1}^N
	\frac{k-i(a_1+(j-\frac12)\delta)}{k-i(a_1+j\delta)}.
	\]
	Hence
	\[
	\log B_N(k)
	=
	\sum_{j=1}^N
	\log\left(
	1+\frac{i\delta/2}{k-i(a_1+j\delta)}
	\right).
	\]
	For $k$ in any compact subset of $\C\setminus i[a_1,b_1]$, we have
	\[
	\log(1+w)=w+\mathcal{O}(w^2),
	\]
	uniformly for small $w$, and therefore
	\[
	\log B_N(k)
	=
	\frac12\sum_{j=1}^N
	\frac{i\delta}{k-i(a_1+j\delta)}
	+\mathcal{O}(N\delta^2).
	\]
	Since $N\delta^2=(b_1-a_1)^2/N\to 0$, the Riemann sum converges uniformly on compact sets to
	\[
	\frac12\int_{a_1}^{b_1}\frac{i\,d\zeta}{k-i\zeta}.
	\]
	Thus
	\[
	\log B_N(k)\to
	\frac12\int_{a_1}^{b_1}\frac{i\,d\zeta}{k-i\zeta}
	=
	\frac12\log\left(\frac{k-ia_1}{k-ib_1}\right),
	\]
	which proves \eqref{limit1-new}. The proof of \eqref{limit2-new} is analogous.
\end{proof}

By Proposition~\ref{prop:BNlimit}, the jump matrix on $\Gamma_{1\pm}$ converges uniformly to its continuum counterpart. Therefore, in the limit $N\to\infty$, \eqref{Zjump-discrete} becomes
\begin{equation}\label{Zjump-limit}
	Z_+(k)=Z_-(k)
	\begin{cases}
		\begin{pmatrix}
			1&0\\
			-r_1(k)e^{2i\theta(k)}\beta(k)&1
		\end{pmatrix}
		\begin{pmatrix}
			1&\rho_1(k)e^{-2i\theta(k)}\beta(k)^{-1}\\
			0&1
		\end{pmatrix},
		& k\in\Gamma_{1+},
		\\[4ex]
		\begin{pmatrix}
			1&-r_1(k)e^{-2i\theta(k)}\widetilde\beta(k)\\
			0&1
		\end{pmatrix}
		\begin{pmatrix}
			1&0\\
			\rho_1(k)e^{2i\theta(k)}\widetilde\beta(k)^{-1}&1
		\end{pmatrix},
		& k\in\Gamma_{1-}.
	\end{cases}
\end{equation}

Next, we define
\begin{equation}\label{Xdef}
	X(k)=Z(k)
	\begin{cases}
		\begin{pmatrix}
			1&-\rho_1(k)e^{-2i\theta(k)}\beta(k)^{-1}\\
			0&1
		\end{pmatrix}
		\begin{pmatrix}
			1&0\\
			r_1(k)e^{2i\theta(k)}\beta(k)&1
		\end{pmatrix},
		& k \text{ inside }\Gamma_{1+},
		\\[4ex]
		\begin{pmatrix}
			1&0\\
			-\rho_1(k)e^{2i\theta(k)}\widetilde\beta(k)^{-1}&1
		\end{pmatrix}
		\begin{pmatrix}
			1&r_1(k)e^{-2i\theta(k)}\widetilde\beta(k)\\
			0&1
		\end{pmatrix},
		& k \text{ inside }\Gamma_{1-},
		\\[3ex]
		I,
		& \text{otherwise}.
	\end{cases}
\end{equation}
By construction, the jumps of $X$ on $\Gamma_{1+}$ and $\Gamma_{1-}$ are removed.

The remaining jumps are located on the branch cuts $i(a_1,b_1)$ and $i(-b_1,-a_1)$. Since
\begin{align}
	\beta_+(k)=-\beta_-(k), \qquad & k\in i(a_1,b_1),\\
	\widetilde\beta_+(k)=-\widetilde\beta_-(k), \qquad & k\in i(-b_1,-a_1),
\end{align}
where both cuts are oriented upward, we obtain the following RH problem.

\begin{RHP}\label{RHP2}
	Find a $1\times 2$ vector-valued function $X(k;x,t)$ such that:
	\begin{enumerate}
		\item
		$X(k;x,t)$ is analytic for
		\[
		k\in \C\setminus\bigl(i[a_1,b_1]\cup i[-b_1,-a_1]\bigr).
		\]
		
		\item
		For $k\in i(a_1,b_1)\cup i(-b_1,-a_1)$, the boundary values $X_{\pm}(k)$ are taken from the left/right side of the oriented contour, and
		\begin{equation}\label{Xjump-factorized}
			X_+(k)=X_-(k)V_X(k),
		\end{equation}
		where
		\begin{equation}\label{VX-factorized}
			V_X(k)=
			\begin{cases}
				\begin{pmatrix}
					1&0\\
				\scriptstyle	r_1(k)\beta_+(k)e^{2i\theta(k)}&1
				\end{pmatrix}
				\begin{pmatrix}
					1& \scriptstyle-2\rho_1(k)\beta_+(k)^{-1}e^{-2i\theta(k)}\\
					0&1
				\end{pmatrix}
				\begin{pmatrix}
					1&0\\
					\scriptstyle r_1(k)\beta_+(k)e^{2i\theta(k)}&1
				\end{pmatrix},
				& k\in i(a_1,b_1),
				\\[5ex]
				\begin{pmatrix}
					1& \scriptstyle r_1(k)\widetilde\beta_+(k)e^{-2i\theta(k)}\\
					0&1
				\end{pmatrix}
				\begin{pmatrix}
					1&0\\
					\scriptstyle -2\rho_1(k)\widetilde\beta_+(k)^{-1}e^{2i\theta(k)}&1
				\end{pmatrix}
				\begin{pmatrix}
					1& \scriptstyle r_1(k)\widetilde\beta_+(k)e^{-2i\theta(k)}\\
					0&1
				\end{pmatrix},
				& k\in i(-b_1,-a_1).
			\end{cases}
		\end{equation}
		
		\item
		As $k\to\infty$,
		\[
		X(k;x,t)=\begin{pmatrix}1&1\end{pmatrix}+\mathcal{O}(k^{-1}).
		\]
	\end{enumerate}
\end{RHP}
Moreover, by using the symmetries
\[
r_1(-k)=r_1(k), \qquad \rho_1(-k)=\rho_1(k),
\]
one verifies that
\[
V_X(-k)=\sigma_1V_X(k)\sigma_1,\ \sigma_1=\begin{pmatrix}
	0&1\\1&0
\end{pmatrix},
\]
and hence
\[
X(-k)=X(k)\sigma_1.
\]
Since $\beta_+(k)\in -i\mathbb R_+$ for $k\in i(a_1,b_1)$, we introduce
\[
-i\hat r_1(k)=r_1(k)\beta_+(k),
\qquad
i\frac{\hat\rho_1(k)}{1+\hat r_1(k)\hat\rho_1(k)}
=\rho_1(k)\beta_+^{-1}(k).
\]
Then $\hat r_1(k)>0$, and
\[
\frac{\hat\rho_1(k)}{1+\hat r_1(k)\hat\rho_1(k)}>0.
\]
If in addition $r_1(k)\rho_1(k)<1$, then $\hat\rho_1(k)>0$.
Then the jump matrix $V_X(k)$ for $X(k)$ transform into \begin{align}\label{factor1}
	& V_X(k)= \begin{cases} \begin{pmatrix} {1}& 0\\ {-i\hat{r}_1(k) e^{2 i\theta(k)} } & {1} \end{pmatrix} \begin{pmatrix} {1}& {-2i \frac{\hat{\rho}_1(k)}{1+\hat{r}_1(k)\hat{\rho}_1(k)}e^{-2 i\theta(k)} }\\0 & {1} \end{pmatrix} \begin{pmatrix} {1}& 0\\ {-i\hat{r}_1(k) e^{2 i\theta(k)} } & {1} \end{pmatrix}, & k \in i(a_1,b_1),\\ \begin{pmatrix} 1 & {i\hat{r}_1(-k)}{e^{-2i\theta(k)}} \\ 0 & 1\end{pmatrix} \begin{pmatrix} 1 & 0\\ 2i\frac{\hat{\rho}_1(-k)}{1+\hat{r}_1(-k)\hat{\rho}_1(-k)}e^{2i\theta(k)} & 1\end{pmatrix} \begin{pmatrix} 1 & {i\hat{r}_1(-k)}{e^{-2i\theta(k)}} \\ 0 & 1\end{pmatrix}, & k \in i(-b_1,-a_1). \end{cases}
	\end{align}
	A direct computation shows that
	\begin{align}
		& V_X(k)= \begin{cases} \begin{pmatrix} \dfrac{1-\hat{r}_1(k)\hat{\rho}_1(k)}{1+\hat{r}_1(k)\hat{\rho}_1(k)}& -\dfrac{2i\hat{\rho}_1(k)}{1+\hat{r}_1(k)\hat{\rho}_1(k)}e^{-2 i\theta(k)}\\ -{\dfrac{2i\hat{r}_1(k)}{1+\hat{r}_1(k)\hat{\rho}_1(k)} e^{2 i\theta(k)} } & \dfrac{1-\hat{r}_1(k)\hat{\rho}_1(k)}{1+\hat{r}_1(k)\hat{\rho}_1(k)} \end{pmatrix}, & k \in i(a_1,b_1),\\ \begin{pmatrix} \dfrac{1-\hat{r}_1(-k)\hat{\rho}_1(-k)}{1+\hat{r}_1(-k)\hat{\rho}_1(-k)} & 2i\frac{\hat{r}_1(-k)}{1+\hat{r}_1(-k)\hat{\rho}_1(-k)}e^{-2i\theta(k)}\\ 2i\frac{\hat{\rho}_1(-k)}{1+\hat{r}_1(-k)\hat{\rho}_1(-k)}e^{2i\theta(k)} &\dfrac{1-\hat{r}_1(-k)\hat{\rho}_1(-k)}{1+\hat{r}_1(-k)\hat{\rho}_1(-k)} \end{pmatrix}, & k \in i(-b_1,-a_1). \end{cases}
		\end{align}
		It is easy to check that the RH problem for $X(k)$ has the same form of the RH problem for KdV primitive potentials introduced in \cite{DZZ}. The jump matrices of above form also appear in Ref.~\cite{GJZZ,Zhu}.
Finally, since all transformations are normalized by the identity at infinity, the potential is recovered from $X$ by
\begin{equation}
	u(x,t)=2\frac{d}{dx}\lim_{k\to\infty}
	\left(
	\frac{k}{i}\bigl(X_1(k;x,t)-1\bigr)
	\right).
\end{equation}

\section{Behavior of the potential $u(x,t)$ as $t\rightarrow +\infty$}\label{sec3}
We first transform the jump contours to the real line. We set $\Sigma_1=(a_1,b_1)$, $\Sigma_{-1}=(-b_1,-a_1)$ and
\begin{align}
	Y(k)=X(ik),\ \hat{r}_1(ik)=\hat{r}(k), \hat{\rho}_1(ik)=\hat{\rho}(k),\ t\hat{\theta}(k)=i\theta(ik)=t(4k^3-4k\xi), \ \xi=\frac{x}{4t}.
\end{align}
We get
\begin{align}
	Y_{+}(k)=Y_{-}(k)V_{Y}(k),\ V_{Y}(k)=\begin{cases}
		\begin{pmatrix} \dfrac{1-\hat{r}(k)\hat{\rho}(k)}{1+\hat{r}(k)\hat{\rho}(k)}&   -\dfrac{2i\hat{\rho}(k)}{1+\hat{r}(k)\hat{\rho}(k)}e^{-2 t\hat{\theta}(k)}\\   -{\dfrac{2i\hat{r}(k)}{1+\hat{r}(k)\hat{\rho}(k)} e^{2 t\hat{\theta}(k)} }  & \dfrac{1-\hat{r}(k)\hat{\rho}(k)}{1+\hat{r}(k)\hat{\rho}(k)} \end{pmatrix}, &  k \in \Sigma_1,\\
		\begin{pmatrix} \dfrac{1-\hat{r}(-k)\hat{\rho}(-k)}{1+\hat{r}(-k)\hat{\rho}(-k)} & 2i\frac{\hat{r}(-k)}{1+\hat{r}(-k)\hat{\rho}(-k)}e^{-2t\hat{\theta}(k)}\\  2i\frac{\hat{\rho}(-k)}{1+\hat{r}(-k)\hat{\rho}(-k)}e^{2t\hat{\theta}(k)} &\dfrac{1-\hat{r}(-k)\hat{\rho}(-k)}{1+\hat{r}(-k)\hat{\rho}(-k)}
		\end{pmatrix}, &   k \in \Sigma_{-1}.
	\end{cases}
\end{align}
Since $\hat{r}(k), \hat{\rho}(k)\neq 0$, one of the off-diagonal term of the jump matrix $V_{Y}(k)$ is always growing. To overcome this problem, we first introduce two new triangle factorization for  $V_{Y}(k)$ on $\Sigma_{1}$:
\begin{align}
		V_{Y}(k)&=\begin{pmatrix} {1}& 0\\  \scriptstyle {-i  \frac{\hat{r}(k)\hat{\rho}(k)-1}{2\hat{\rho}(k)} e^{2 t\hat{\theta}(k)} }  & {1} \end{pmatrix}
		\begin{pmatrix} 0 &  \scriptstyle {-i  \frac{2\hat{\rho}(k)}{1+\hat{r}(k)\hat{\rho}(k)}e^{-2 t\hat{\theta}(k)} }\\ \scriptstyle -i  \le(\frac{2\hat{\rho}(k)}{1+\hat{r}(k)\hat{\rho}(k)}\ri)^{-1}e^{2 t\hat{\theta}(k)}  & 0 \end{pmatrix}
		\begin{pmatrix} {1}& 0\\ \scriptstyle {-i  \frac{\hat{r}(k)\hat{\rho}(k)-1}{2\hat{\rho}(k)} e^{2 t\hat{\theta}(k)} } & {1} \end{pmatrix},\label{factor11}\\
		&=\begin{pmatrix} 1 &   \scriptstyle -i\frac{\hat{r}(k)\hat{\rho}(k)-1}{2\hat{r}(k)}e^{-2t\hat{\theta}(k)} \\ 0 & 1\end{pmatrix}
		\begin{pmatrix} 0 & \scriptstyle -i\le(\frac{2\hat{r}(k)}{1+\hat{r}(k)\hat{\rho}(k)}\ri)^{-1}e^{-2t\hat{\theta}(k)}\\  \scriptstyle -i\frac{2\hat{r}(k)}{1+\hat{r}(k)\hat{\rho}(k)}e^{2t\hat{\theta}(k)} & 0\end{pmatrix}	\begin{pmatrix} 1 &  \scriptstyle -i\frac{\hat{r}(k)\hat{\rho}(k)-1}{2\hat{r}(k)}e^{-2t\hat{\theta}(k)} \\ 0 & 1\end{pmatrix},\label{factor12}
\end{align}
and on $\Sigma_{-1}$:
\begin{align}
		V_{Y}(k)&=\begin{pmatrix} 1 &   \scriptstyle i\frac{\hat{r}(-k)\hat{\rho}(-k)-1}{2\hat{\rho}(-k)}e^{-2t\hat{\theta}(k)} \\ 0 & 1\end{pmatrix}
	\begin{pmatrix} 0 & \scriptstyle i\le(\frac{2\hat{\rho}(-k)}{1+\hat{r}(-k)\hat{\rho}(-k)}\ri)^{-1}e^{-2t\hat{\theta}(k)}\\  \scriptstyle i\frac{2\hat{\rho}(-k)}{1+\hat{r}(-k)\hat{\rho}(-k)}e^{2t\hat{\theta}(k)} & 0\end{pmatrix}	\begin{pmatrix} 1 &  \scriptstyle i\frac{\hat{r}(-k)\hat{\rho}(-k)-1}{2\hat{\rho}(-k)}e^{-2t\hat{\theta}(k)} \\ 0 & 1\end{pmatrix},\label{factor21}\\
	&=\begin{pmatrix} {1}& 0\\  \scriptstyle {i  \frac{\hat{r}(-k)\hat{\rho}(-k)-1}{2\hat{r}(-k)} e^{2 t\hat{\theta}(k)} }  & {1} \end{pmatrix}
	\begin{pmatrix} 0 &  \scriptstyle {i  \frac{2\hat{r}(-k)}{1+\hat{r}(-k)\hat{\rho}(-k)}e^{-2 t\hat{\theta}(k)} }\\ \scriptstyle i  \le(\frac{2\hat{r}(-k)}{1+\hat{r}(-k)\hat{\rho}(-k)}\ri)^{-1}e^{2t\hat{\theta}(k)}  & 0 \end{pmatrix}
	\begin{pmatrix} {1}& 0\\ \scriptstyle {i  \frac{\hat{r}(-k)\hat{\rho}(-k)-1}{2\hat{r}(-k)} e^{2 t\hat{\theta}(k)} } & {1} \end{pmatrix}.\label{factor22}
\end{align}
Next, we introduce a $g$-function
 by prescribing its derivative as
\begin{equation}\label{gprime}
	g'(k)
	=
	-12k^2+4\xi
	+
	\frac{12Q_2(k)-4\xi Q_1(k)}{R(k)},
\end{equation}
where
\begin{equation}\label{RQ}
	R(k)=\sqrt{(k^2-a_1^2)(k^2-b_1^2)},
	\qquad
	Q_1(k)=k^2+c_1,
	\qquad
	Q_2(k)=k^4-\frac12(a_1^2+b_1^2)k^2+c_2.
\end{equation}
The constants \(c_1\) and \(c_2\) are determined by the conditions
\begin{equation}\label{moments}
	\int_0^{a_1}\frac{Q_1(\zeta)}{R_+(\zeta)}\,d\zeta=0,
	\qquad
	\int_0^{a_1}\frac{Q_2(\zeta)}{R_+(\zeta)}\,d\zeta=0.
\end{equation}
Integrating \eqref{gprime} from \(b_1\) to \(k\), we obtain
\begin{equation}\label{eq:g}
	g(k)
	=
	-4k^3+4\xi k
	+\int_{b_1}^{k}
	\frac{12Q_2(\zeta)-4\xi Q_1(\zeta)}{R(\zeta)}\,d\zeta.
\end{equation}

By construction, \(g(k)\) is analytic in \(\mathbb{C}\setminus[-b_1,b_1]\) and satisfies
the jump conditions
\begin{equation}\label{g-jump1}
	g_+(k)+g_-(k)+8k^3-8\xi k=0,
	\qquad
	k\in (a_1,b_1)\cup(-b_1,-a_1),
\end{equation}
\begin{equation}\label{g-jump2}
	g_+(k)-g_-(k)=\Omega,
	\qquad
	k\in[-a_1,a_1],
\end{equation}
\begin{equation}\label{g-inf}
	g(k)=\mathcal{O}\!\left(\frac{1}{k}\right),
	\qquad k\to\infty.
\end{equation}
Here
\begin{equation}\label{eq:subcritical-Omega}
	\Omega
	=
	\frac{2\pi i\,b_1}{K(m)}
	\bigl(2\xi-a_1^2-b_1^2\bigr)
	\in i\mathbb{R},
	\qquad
	m=\frac{a_1}{b_1}.
\end{equation}
To  open the lenses, one has to determine the sign of
\[
\Re\!\ \bigl(2g(k)+8k^3-8\xi k\bigr)
\]
on the lens boundaries.

\begin{proposition}\label{prop:q-roots}
	Let
	\[
	q(r;\xi)
	=
	12\left(r^2-\frac12(a_1^2+b_1^2)r+c_2\right)-4\xi(r+c_1),
	\]
	where
	\[
	c_1=-b_1^2+b_1^2\frac{E(m)}{K(m)},
	\qquad
	c_2=\frac13 a_1^2 b_1^2+\frac16(a_1^2+b_1^2)c_1,
	\qquad
	m=\frac{a_1}{b_1}\in(0,1),
	\]
  and \(E(m)\) denotes the complete elliptic integrals of the
	second kinds.
	Define
	\[
	Q_1(k)=k^2+c_1,
	\qquad
	Q_2(k)=k^4-\frac12(a_1^2+b_1^2)k^2+c_2,
	\]
	and
	\begin{align}\label{xicrit}
	\xi_{\mathrm{crit}}=\frac{3Q_2(a_1)}{Q_1(a_1)},
	\qquad
	\xi_*=\frac{3Q_2(b_1)}{Q_1(b_1)}.
	\end{align}
	Then the roots of \(q(r;\xi)\) satisfy:
	
	\begin{enumerate}
		\item If \(\xi<\xi_{\mathrm{crit}}\), then \(q(\cdot;\xi)\) has one root in \((-\infty,a_1^2)\) and one root in \((0,a_1^2)\).
		
		\item If \(\xi_{\mathrm{crit}}<\xi<\xi_*\), then \(q(\cdot;\xi)\) has exactly one root in \((0,a_1^2)\) and exactly one root in \((a_1^2,b_1^2)\).
		
		\item If \(\xi>\xi_*\), then \(q(\cdot;\xi)\) has exactly one root in \((0,a_1^2)\) and exactly one root in \((b_1^2,\infty)\).
	\end{enumerate}
\end{proposition}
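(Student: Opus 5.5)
The plan is to treat $q(\cdot;\xi)$ as a quadratic in $r$ with positive leading coefficient $12$, and to read off its root locations from three pieces of information. First, an interior root in $(0,a_1^2)$ is forced by the moment conditions \eqref{moments}. Second, the signs of $q$ at $r=a_1^2$ and $r=b_1^2$ are expressed through $\xi_{\mathrm{crit}}$ and $\xi_*$. Third, one needs the ordering $\xi_{\mathrm{crit}}<\xi_*$. The basic identity is $q(k^2;\xi)=12Q_2(k)-4\xi Q_1(k)$, which is the numerator in \eqref{gprime}.

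\emph{Step 1 (positivity of $Q_1$ at the endpoints).} Since $0<m<1$, the standard inequality $1-m^2<E(m)/K(m)<1$ for complete elliptic integrals gives
\[
c_1=-b_1^2\bigl(1-E(m)/K(m)\bigr)\in(-b_1^2m^2,0)=(-a_1^2,0).
\]
Hence $Q_1(a_1)=a_1^2+c_1>0$ and $Q_1(b_1)>0$, so $\xi_{\mathrm{crit}}$ and $\xi_*$ are well defined. We then obtain
\[
q(a_1^2;\xi)=4Q_1(a_1)(\xi_{\mathrm{crit}}-\xi),\qquad q(b_1^2;\xi)=4Q_1(b_1)(\xi_*-\xi).
\]
I take as given that the stated $c_1,c_2$ are the solutions of \eqref{moments}; this is how the paper introduces them.

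\emph{Step 2 (a root in $(0,a_1^2)$ for every $\xi$).} On $(0,a_1)$ the function $R_+$ has a constant nonzero sign, being $\pm$ the square root of a positive quantity, and $1/R_+$ is integrable. By linearity of \eqref{moments},
\[
\int_0^{a_1}\frac{q(\zeta^2;\xi)}{R_+(\zeta)}\,d\zeta=0 .
\]
The continuous function $\zeta\mapsto q(\zeta^2;\xi)$ cannot have constant strict sign on $(0,a_1)$, so it vanishes at some interior point. Thus $q(\cdot;\xi)$ has a root $r_0(\xi)\in(0,a_1^2)$.

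\emph{Step 3 (ordering $\xi_{\mathrm{crit}}<\xi_*$).} At $\xi=\xi_{\mathrm{crit}}$ we have $q(a_1^2)=0$, and by Step 2 the other root is $r_0\in(0,a_1^2)$. If the roots coincided, we would get $r_0=a_1^2$, which is impossible. So both roots lie in $(0,a_1^2]$. Since the leading coefficient is positive, $q(b_1^2;\xi_{\mathrm{crit}})>0$, and Step 1 then gives $\xi_*>\xi_{\mathrm{crit}}$.

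\emph{Step 4 (case analysis).} The quadratic has exactly two real roots, counted with multiplicity, and one of them is $r_0\in(0,a_1^2)$.
\begin{itemize}
\item If $\xi<\xi_{\mathrm{crit}}$, then $q(a_1^2)>0$. So $a_1^2$ lies outside the root interval, and since $r_0<a_1^2$ both roots lie in $(-\infty,a_1^2)$, one of them in $(0,a_1^2)$.
\item If $\xi_{\mathrm{crit}}<\xi<\xi_*$, then $q(a_1^2)<0<q(b_1^2)$. Hence exactly one root lies below $a_1^2$, namely $r_0\in(0,a_1^2)$, and exactly one lies in $(a_1^2,b_1^2)$.
\item If $\xi>\xi_*$, then $q(a_1^2)<0$ and $q(b_1^2)<0$. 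So the interval $[a_1^2,b_1^2]$ lies strictly between the roots: one root is $r_0\in(0,a_1^2)$ and the other lies in $(b_1^2,\infty)$.
\end{itemize}

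The main obstacle is Step 1, namely establishing $c_1\in(-a_1^2,0)$, i.e. the inequality $E/K>1-m^2$, together with confirming that the given closed forms of $c_1,c_2$ really satisfy \eqref{moments}. If needed, I would verify the latter by the substitution $\zeta=a_1\operatorname{sn}(u|m)$, which turns the integrals into complete elliptic integrals.
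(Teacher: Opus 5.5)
Your proof is correct, but it obtains the root in $(0,a_1^2)$ by a different mechanism from the paper's.

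\textbf{The paper's route.} The paper never invokes the moment conditions \eqref{moments}. It evaluates $q$ at the $\xi$-independent pivot $r=-c_1\in(0,a_1^2)$ and asserts without proof that $q(-c_1;\xi)=12\bigl(c_1^2+\frac12(a_1^2+b_1^2)c_1+c_2\bigr)<0$, which is a nontrivial inequality in $E(m)/K(m)$. Case (i) then follows from the intermediate value theorem on $(-c_1,a_1^2)$. For cases (ii)--(iii) it uses two facts: $q(0;\xi)=12c_2-4\xi c_1$ is increasing in $\xi$, and $q(0;\xi_{\mathrm{crit}})>0$. The second fact is justified by saying the other root at $\xi_{\mathrm{crit}}$ lies in $(0,-c_1)$. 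However, $q(-c_1)<0$ only places that root below $-c_1$, not above $0$. The paper also uses the ordering $\xi_{\mathrm{crit}}<\xi_*$ tacitly in case (iii).

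\textbf{Your route.} The identity $\int_0^{a_1}q(\zeta^2;\xi)\,d\zeta/R_+(\zeta)=0$ forces a sign change, so it produces a root in $(0,a_1^2)$ for every $\xi$ at once. This supplies exactly the positivity the paper's case (ii) is missing. It also yields $\xi_{\mathrm{crit}}<\xi_*$ in your Step 3, and it makes the pivot inequality unnecessary.

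\textbf{Trade-off.} The paper's approach works from the closed forms of $c_1,c_2$ alone, but it pays with an unproved elliptic-integral inequality and the gap above. Your approach instead needs the closed forms to solve \eqref{moments}, and this does check out. Use the substitution $\zeta=a_1\sin\phi$ and the modulus convention $K(m)=\int_0^{\pi/2}(1-m^2\sin^2\phi)^{-1/2}\,d\phi$, which is the convention consistent with the paper's formulas. Then for $j=0,1,2$ the integrals $\int_0^{a_1}\zeta^{2j}\,d\zeta/\sqrt{(a_1^2-\zeta^2)(b_1^2-\zeta^2)}$ equal $K/b_1$, $b_1(K-E)$ and $\frac{b_1^3}{3}\bigl((2+m^2)K-2(1+m^2)E\bigr)$ respectively. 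Substituting these into \eqref{moments} returns precisely the stated $c_1$ and $c_2=\frac13a_1^2b_1^2+\frac16(a_1^2+b_1^2)c_1$.

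Your Step 1 inequality also holds in this convention, since $E-(1-m^2)K=m^2\int_0^{\pi/2}\cos^2\phi\,(1-m^2\sin^2\phi)^{-1/2}\,d\phi>0$. Note that under the parameter convention it would only give $E/K>1-m$, which is too weak for $c_1>-a_1^2$.
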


\begin{proof}
	Since
	\[
	c_1=b_1^2\left(\frac{E(m)}{K(m)}-1\right)<0,
	\]
	we have
	\[
	0<-c_1<a_1^2,
	\qquad
	a_1^2+c_1>0,
	\qquad
	b_1^2+c_1>0.
	\]
	Moreover,
	\[
	q(a_1^2;\xi)
	=
	12Q_2(a_1)-4\xi Q_1(a_1)
	=
	4Q_1(a_1)(\xi_{\mathrm{crit}}-\xi)
	=
	4(a_1^2+c_1)(\xi_{\mathrm{crit}}-\xi),
	\]
	and similarly,
	\[
	q(b_1^2;\xi)
	=
	12Q_2(b_1)-4\xi Q_1(b_1)
	=
	4Q_1(b_1)(\xi_*-\xi)
	=
	4(b_1^2+c_1)(\xi_*-\xi).
	\]
	Hence
	\[
	q(a_1^2;\xi)
	\begin{cases}
		>0,& \xi<\xi_{\mathrm{crit}},\\
		<0,& \xi>\xi_{\mathrm{crit}},
	\end{cases}
	\qquad
	q(b_1^2;\xi)
	\begin{cases}
		>0,& \xi<\xi_*,\\
		<0,& \xi>\xi_*.
	\end{cases}
	\]
	
	Next, at the fixed point \(r=-c_1\in(0,a_1^2)\), the \(\xi\)-dependence disappears:
	\[
	q(-c_1;\xi)
	=
	12\left(c_1^2+\frac12(a_1^2+b_1^2)c_1+c_2\right).
	\]
	Using the explicit formulas for \(c_1\) and \(c_2\), one verifies that
	\[
	q(-c_1;\xi)<0.
	\]
	
	We now distinguish three cases.
	
	\medskip
	\noindent
	\textit{Case 1: \(\xi<\xi_{\mathrm{crit}}\).}
	Since
	\[
	q(-c_1;\xi)<0<q(a_1^2;\xi),
	\qquad -c_1\in(0,a_1^2),
	\]
	the intermediate value theorem yields one root \(d_1\in(-c_1,a_1^2)\subset(0,a_1^2)\).
	Since \(q(\cdot;\xi)\) is a quadratic polynomial with positive leading coefficient, it has
	exactly two roots (counted with multiplicity). Let the other root be \(d_2\). If \(d_2>a_1^2\),
	then \(a_1^2\) lies between \(d_1\) and \(d_2\), and hence
	\[
	q(a_1^2;\xi)=12(a_1^2-d_1)(a_1^2-d_2)<0,
	\]
	which contradicts \(q(a_1^2;\xi)>0\). Therefore \(d_2<a_1^2\). Hence one root lies in
	\((0,a_1^2)\), while the other lies in \((-\infty,a_1^2)\).
	
	\medskip
	\noindent
	\textit{Case 2: \(\xi_{\mathrm{crit}}<\xi<\xi_*\).}
	At \(\xi=\xi_{\mathrm{crit}}\), we have \(q(a_1^2;\xi_{\mathrm{crit}})=0\). Since also
	\(q(-c_1;\xi_{\mathrm{crit}})<0\) and \(-c_1\in(0,a_1^2)\), there is another root in
	\((0,-c_1)\), and therefore
	\[
	q(0;\xi_{\mathrm{crit}})>0.
	\]
	Because
	\[
	q(0;\xi)=12c_2-4\xi c_1
	\]
	is strictly increasing in \(\xi\) (as \(c_1<0\)), it follows that
	\[
	q(0;\xi)>q(0;\xi_{\mathrm{crit}})>0
	\qquad\text{for all }\xi>\xi_{\mathrm{crit}}.
	\]
	Hence, for \(\xi_{\mathrm{crit}}<\xi<\xi_*\),
	\[
	q(0;\xi)>0,\qquad q(a_1^2;\xi)<0,\qquad q(b_1^2;\xi)>0.
	\]
	Therefore \(q(\cdot;\xi)\) has one root in \((0,a_1^2)\) and one root in \((a_1^2,b_1^2)\).
	
	\medskip
	\noindent
	\textit{Case 3: \(\xi>\xi_*\).}
	As above, \(q(0;\xi)>0\) for every \(\xi>\xi_{\mathrm{crit}}\). Thus
	\[
	q(0;\xi)>0,\qquad q(a_1^2;\xi)<0,\qquad q(b_1^2;\xi)<0.
	\]
	Hence \(q(\cdot;\xi)\) has one root in \((0,a_1^2)\). Since \(q(r;\xi)\to+\infty\) as
	\(r\to+\infty\), while \(q(b_1^2;\xi)<0\), it has another root in \((b_1^2,\infty)\).
	
	Finally, since \(q(\cdot;\xi)\) is a quadratic polynomial with positive leading coefficient,
	it can have at most two real roots. Therefore the above roots are exactly all the roots of
	\(q(\cdot;\xi)\).
\end{proof}

\begin{figure}
	\centering
	\scalebox{.75}{
		\begin{tikzpicture}[scale=1.0, line cap=round, line join=round]
			
			\tikzset{
				myarrow/.style={
					postaction={decorate},
					decoration={markings, mark=at position 0.58 with {\arrow{latex}}}
				}
			}
			
			\coordinate (L2) at (-6,0);
			\coordinate (L1) at (-2,0);
			\coordinate (R1) at ( 2,0);
			\coordinate (R2) at ( 6,0);
			
			\draw[dashed] (-6.9,0)--(L2);
			\draw[dashed] (R2)--(6.9,0);
			
			\draw[thin,myarrow] (L2)--(L1);
			\draw[thin,myarrow] (L1)--(R1);
			\draw[thin,myarrow] (R1)--(R2);
			
			\draw[thick,myarrow] (L2) .. controls (-4.9,1.7) and (-3.1,1.7) .. (L1);
			\draw[thick,myarrow] (R1) .. controls (3.1,1.7) and (4.9,1.7) .. (R2);
			
			\draw[thick,myarrow] (L2) .. controls (-4.9,-1.7) and (-3.1,-1.7) .. (L1);
			\draw[thick,myarrow] (R1) .. controls (3.1,-1.7) and (4.9,-1.7) .. (R2);
			
			\fill (L2) circle (1.4pt);
			\fill (L1) circle (1.4pt);
			\fill (R1) circle (1.4pt);
			\fill (R2) circle (1.4pt);
			
			\node[above left=1pt]  at (L2) {$-b_1$};
			\node[above right=1pt] at (L1) {$-a_1$};
			\node[above left=1pt]  at (R1) {$a_1$};
			\node[above right=1pt] at (R2) {$b_1$};
			
			\node at (-3.15, 0.33) {$\Sigma_{-1}$};
			\node at ( 5.00, 0.33) {$\Sigma_1$};
			\node at (-3.85,-1.60) {$\mathcal{C}_{-1}$};
			\node at ( 3.85,-1.60) {$\mathcal{C}_1$};
				\node at (-3.85,1.60) {$\mathcal{C}_{-1}$};
			\node at ( 3.85,1.60) {$\mathcal{C}_1$};
		\end{tikzpicture}
	}
	\caption{Opening lenses and the jump contours for $Z(k)$.}
	\label{openinglenses1}
\end{figure}

\begin{lemma}\label{lemma3.1}
	The following sign properties hold.
	
	\smallskip
	\noindent
	(i) If $\xi>\xi_*$, then
	\begin{align}
		\Re\!\bigl(2g(k)+8k^3-8\xi k\bigr)&<0,
		\qquad k\in \mathcal{C}_1\setminus\{a_1,b_1\},\label{eq:sign-super-1}\\
		\Re\!\bigl(2g(k)+8k^3-8\xi k\bigr)&>0,
		\qquad k\in \mathcal{C}_{-1}\setminus\{-a_1,-b_1\},\label{eq:sign-super-2}
	\end{align}
	where the contours $\mathcal{C}_1$ and $\mathcal{C}_{-1}$ are shown in Fig.~\ref{openinglenses1}.
	
	\smallskip
	\noindent
	(ii) If $\xi_{\mathrm{crit}}<\xi<\xi_*$, then
	\begin{align}
		\Re\!\bigl(2g(k)+8k^3-8\xi k\bigr)&>0,
		\qquad k\in \tilde{\mathcal{C}}_1\setminus\{\tilde{\alpha},b_1\},\label{eq:sign-mid-1}\\
		\Re\!\bigl(2g(k)+8k^3-8\xi k\bigr)&<0,
		\qquad k\in \tilde{\mathcal{C}}_{-1}\setminus\{-\tilde{\alpha},-b_1\},\label{eq:sign-mid-2}\\
		\Re\!\bigl(2g(k)+8k^3-8\xi k\bigr)&<0,
		\qquad k\in \tilde{\mathcal{C}}_2\setminus\{\tilde{\alpha},a_1\},\label{eq:sign-mid-3}\\
		\Re\!\bigl(2g(k)+8k^3-8\xi k\bigr)&>0,
		\qquad k\in \tilde{\mathcal{C}}_{-2}\setminus\{-\tilde{\alpha},-a_1\},\label{eq:sign-mid-4}
	\end{align}
	where the contours $\tilde{\mathcal{C}}_{\pm1}$ and $\tilde{\mathcal{C}}_{\pm2}$ are shown in Fig.~\ref{openinglenses2} and $\pm\tilde\alpha$ are the stationary points of $g(k)+\hat{\theta}(k)$.
	
	\smallskip
	\noindent
	(iii) If $\xi<\xi_{\mathrm{crit}}$, then
	\begin{align}
		\Re\!\bigl(2g(k)+8k^3-8\xi k\bigr)&>0,
		\qquad k\in \mathcal{C}_1\setminus\{a_1,b_1\},\label{eq:sign-sub-1}\\
		\Re\!\bigl(2g(k)+8k^3-8\xi k\bigr)&<0,
		\qquad k\in \mathcal{C}_{-1}\setminus\{-a_1,-b_1\},\label{eq:sign-sub-2}
	\end{align}
	where the contours $\mathcal{C}_1$ and $\mathcal{C}_{-1}$ are shown in Fig.~\ref{openinglenses1}.
\end{lemma}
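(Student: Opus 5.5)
The plan is to study the function
\[
\phi(k):=2g(k)+8k^3-8\xi k=2\int_{b_1}^{k}\frac{12Q_2(\zeta)-4\xi Q_1(\zeta)}{R(\zeta)}\,d\zeta ,
\]
which follows directly from \eqref{eq:g}. Its derivative is
\[
\phi'(k)=\frac{2q(k^2;\xi)}{R(k)},
\]
with $q$ the quadratic of Proposition~\ref{prop:q-roots}.

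\textbf{Step 1: $\Re\phi$ on the bands.} By \eqref{g-jump1}, $\phi_++\phi_-=0$ on $(a_1,b_1)\cup(-b_1,-a_1)$. Since $R_+$ is purely imaginary there and the integrand is real, $\phi_\pm$ is purely imaginary on the bands (starting from $\phi(b_1)=0$). Therefore $\Re\phi_\pm=0$ on $\Sigma_{\pm1}$.

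\textbf{Step 2: sign just off the band.} Take $k=s\pm i\varepsilon$ with $s\in(a_1,b_1)$. The Cauchy--Riemann equations give
\[
\partial_\varepsilon\Re\phi\big|_{\varepsilon=0^\pm}=\mp\Im\phi'_\pm(s)=\mp\Im\frac{2q(s^2;\xi)}{R_\pm(s)}.
\]
This is governed by the sign of $q(s^2;\xi)$ on the band, together with the fixed sign of $R_+/i$ on $(a_1,b_1)$, which must be fixed once from the branch convention $R(k)\sim k^2$.

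The case analysis then reads off Proposition~\ref{prop:q-roots}:
\begin{itemize}
\item For $\xi>\xi_*$, both roots of $q$ lie outside $[a_1^2,b_1^2]$. Since $q<0$ at $a_1^2$ and at $b_1^2$, we get $q(s^2)<0$ on the whole band. Hence $\Re\phi$ has one strict sign on both the upper and lower sides near the band, because $R_-=-R_+$ flips the sign of $\Im$ while the $\mp$ flips it back.
\item For $\xi<\xi_{\mathrm{crit}}$, $q>0$ on $[a_1^2,b_1^2]$, which yields the opposite sign.
\item For $\xi_{\mathrm{crit}}<\xi<\xi_*$, $q$ changes sign exactly once at $s^2=\tilde\alpha^2\in(a_1^2,b_1^2)$, the stationary point of $g+\hat\theta$. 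This gives opposite signs on $(a_1,\tilde\alpha)$ and $(\tilde\alpha,b_1)$, matching the lenses $\tilde{\mathcal C}_{2}$ and $\tilde{\mathcal C}_{1}$.
\end{itemize}
The statements on $\mathcal C_{-1}$ and $\tilde{\mathcal C}_{-1},\tilde{\mathcal C}_{-2}$ follow from the oddness of $\phi$: $g'$ is even, so $\phi(-k)=-\phi(k)$ up to the period $\Omega\in i\mathbb R$, which does not affect $\Re$. This oddness reverses the sign, as in the statements.

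\textbf{Step 3: from infinitesimal to finite lenses.} On each lens-opening subinterval, $\Re\phi$ vanishes on the segment and has a nonzero normal derivative in the interior. Hence there is a thin lens region where the strict sign persists, and we choose $\mathcal C_{\pm1}$ and $\tilde{\mathcal C}_{\pm1,\pm2}$ inside it.

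Near the endpoints $a_1,b_1$ we expand $\phi\sim c(k-b_1)^{3/2}$, or $(k-b_1)^{1/2}$ times a nonzero constant when $q(b_1^2)\neq0$. Near $\tilde\alpha$ we expand $\phi-\phi(\tilde\alpha)\sim c(k-\tilde\alpha)^2$, and $\Re\phi(\tilde\alpha)=0$ there. These local expansions show that the lens contours can leave the endpoints and $\tilde\alpha$ at angles lying in the correct sectors, so the strict inequalities hold on the whole contour except at the listed endpoints.

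\textbf{Main obstacle.} The hard step is keeping branch and orientation conventions consistent. This means fixing the sign of $R_+$ on $(a_1,b_1)$ and checking which side is $+$ for the upward/rightward orientation in Fig.~\ref{openinglenses1}, so that the sign from Step 2 really matches the stated inequalities (for instance $<0$ on $\mathcal C_1$ for $\xi>\xi_*$). I would verify this against a large-$\xi$ limiting check, where $\phi\approx 8\xi\int_{b_1}^k Q_1/R$ after removing the cubic part. A secondary point is the endpoint sector analysis at $\tilde\alpha$, where two lenses meet at a saddle.
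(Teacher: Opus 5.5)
Your proposal follows the paper's proof: $\Re\bigl(2g+8k^3-8\xi k\bigr)$ vanishes on the bands, Cauchy--Riemann turns its normal derivative into $-\Im\phi'_+(s)=-\Im\bigl(2q(s^2;\xi)/R_+(s)\bigr)$, and Proposition~\ref{prop:q-roots} fixes the sign of $q$ on each band piece. Your oddness argument for $\Sigma_{-1}$ and your endpoint/saddle expansions in Step~3 supply details that the paper only asserts (``the same argument'', ``similar argument on $F_-$'').

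The branch issue you flag is settled exactly as the paper does. Since $R\sim k^2$, one gets $R_+(s)=i|R_+(s)|$ on $(a_1,b_1)$, because only the factor $(k-b_1)^{1/2}$ picks up $e^{i\pi/2}$. Hence $\partial_\varepsilon\Re\phi|_{0^+}=2q(s^2;\xi)/|R_+(s)|$, which is negative for $\xi>\xi_*$, as required. Note also that your large-$\xi$ heuristic should read $\phi\approx-8\xi\int_{b_1}^{k}Q_1/R$, and that it cannot replace this branch computation in any case.
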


\begin{proof}
	We only prove part (i), since the proofs of parts (ii) and (iii) are entirely analogous.
	
	Set
	\[
	F(k):=g(k)+\hat{\theta}(k),
	\qquad \hat{\theta}(k)=4k^3-4\xi k.
	\]
	Then
	$
	2g(k)+8k^3-8\xi k=2F(k),
	$
	so it suffices to determine the sign of $\Re F(k)$.	
	Write
	$
	k=a+ib,
	F(k)=A(a,b)+iB(a,b),
	$
	where $A=\Re F$ and $B=\Im F$. By \eqref{eq:g}, the boundary values of $F$ on
	$
	\Sigma_1\cup \Sigma_{-1}
	$
	are purely imaginary. In particular,
	$
	A(a,0)=0, a\in \Sigma_1\cup\Sigma_{-1}.
	$
	
	For $\xi>\xi_*$, Proposition~\ref{prop:q-roots} yields
	\[
	F'(k)=g'(k)+\hat{\theta}'(k)
	=12\,\frac{(k^2-k_1^2)(k^2-\alpha^2)}{R(k)},
	\]
	where
	$
	k_1<a_1, \alpha>b_1.
	$
	Now let $k\in \Sigma_1=(a_1,b_1)$. Since $R_+(k)=i|R_+(k)|$, we have
	$
	F'_+(k)
	=
	-\,12i\,\frac{(k^2-k_1^2)(k^2-\alpha^2)}{|R_+(k)|}.
	$
	Hence
	$
	B_a(a,0)=\Im F'_+(k)
	=
	-12\,\frac{(k^2-k_1^2)(k^2-\alpha^2)}{|R_+(k)|}.
	$
	Because $k\in(a_1,b_1)$, $k_1<a_1$, and $\alpha>b_1$, we have
	$
	k^2-k_1^2>0,
	k^2-\alpha^2<0,
	$
	and therefore
	$
	B_a(a,0)>0.
	$
	By the Cauchy--Riemann equations,
	$
	A_b(a,0)=-B_a(a,0)<0.
	$
	Since $A(a,0)=0$ on $\Sigma_1$, it follows that
	$
	A(a,b)<0
	$
	for $b>0$ sufficiently small, that is,
	$
	\Re F(k)<0
	$
	in the upper lens adjacent to $\Sigma_1$ and on $\mathcal{C}_1$. By the similar argument on $F_{-}(k)$, we get the  inequality holds on the whole contour
	$
	\mathcal{C}_1\setminus\{a_1,b_1\}.
	$
	This proves \eqref{eq:sign-super-1}.
	
	The proof of \eqref{eq:sign-super-2} is similar. Indeed, for $k\in \Sigma_{-1}=(-b_1,-a_1)$, the same argument gives
	$
	A_b(a,0)>0,
	$
	and hence
	$
	\Re F(k)>0
	$
	on $\mathcal{C}_{-1}\setminus\{-a_1,-b_1\}$. This proves \eqref{eq:sign-super-2}.
	
	Therefore part (i) follows, and parts (ii) and (iii) can be established in exactly the same way.
\end{proof}
In the following sections, we focus mainly on the asymptotic analysis in regions $\xi>\xi_{*}$ and $\xi_{crit}<\xi<\xi_{}*$. The analysis in region $\xi<\xi_{crit}$ is very similar to that in Region $\xi>\xi_{*}$; the main difference is that we employ a different triangular factorization of the jump matrix.
\subsection{The Supercritical Region $\xi>\xi_{*}$}\label{subsec3.1}
By using the triangular factorization of the jump matrix in \eqref{factor11} and \eqref{factor21}, we first open lenses by the transformation
\begin{align}
	Z(k)=Y(k)\begin{cases}
		\begin{pmatrix} {1}& 0\\   {i  \frac{\hat{r}(k)\hat{\rho}(k)-1}{2\hat{\rho}(k)} e^{2 t\hat{\theta}(k)} }  & {1} \end{pmatrix},\ k\in \text{lens upper of $\Sigma_1$ },\\
		\begin{pmatrix} {1}& 0\\   {-i  \frac{\hat{r}(k)\hat{\rho}(k)-1}{2\hat{\rho}(k)} e^{2 t\hat{\theta}(k)} }  & {1} \end{pmatrix},\ k\in \text{lens lower of $\Sigma_1$ },\\
		\begin{pmatrix} 1 &    -i\frac{\hat{r}(-k)\hat{\rho}(-k)-1}{2\hat{\rho}(-k)}e^{-2t\hat{\theta}(k)} \\ 0 & 1\end{pmatrix}, \ k\in \text{lens upper of $\Sigma_{-1}$ },\\
		\begin{pmatrix} 1 &    i\frac{\hat{r}(-k)\hat{\rho}(-k)-1}{2\hat{\rho}(-k)}e^{-2t\hat{\theta}(k)} \\ 0 & 1\end{pmatrix}, \ k\in \text{lens lower of $\Sigma_{-1}$ },
		\end{cases}
\end{align}
where the lenses are shown in Fig.\ref{openinglenses1}. Then $Z(k)$ has the jump conditions:
\begin{align}
	Z_{+}(k)=Z_{-}(k)\begin{cases}
		\begin{pmatrix} {1}& 0\\   {-i  \frac{\hat{r}(k)\hat{\rho}(k)-1}{2\hat{\rho}(k)} e^{2 t\hat{\theta}(k)} }  & {1} \end{pmatrix},\ k\in \mathcal{C}_1,\\
		\begin{pmatrix} 1 &    i\frac{\hat{r}(-k)\hat{\rho}(-k)-1}{2\hat{\rho}(-k)}e^{-2t\hat{\theta}(k)} \\ 0 & 1\end{pmatrix}, \ k\in\mathcal{C}_{-1},\\
		\begin{pmatrix} 0 &   {-i  \frac{2\hat{\rho}(k)}{1+\hat{r}(k)\hat{\rho}(k)}e^{-2 t\hat{\theta}(k)} }\\  -i  \le(\frac{2\hat{\rho}(k)}{1+\hat{r}(k)\hat{\rho}(k)}\ri)^{-1}e^{2 t\hat{\theta}(k)}  & 0 \end{pmatrix},\ k\in \Sigma_{1},\\
			\begin{pmatrix} 0 &  i\le(\frac{2\hat{\rho}(-k)}{1+\hat{r}(-k)\hat{\rho}(-k)}\ri)^{-1}e^{-2t\hat{\theta}(k)}\\   i\frac{2\hat{\rho}(-k)}{1+\hat{r}(-k)\hat{\rho}(-k)}e^{2t\hat{\theta}(k)} & 0\end{pmatrix},\ k\in \Sigma_{-1}.
	\end{cases}
\end{align}
Then we define a scalar function $f(k)$:
\begin{align}
f(k)=\exp\le(\frac{R(k)}{2\pi i} \le(\int_{\Sigma_{1}}\frac{\log \frac{2\hat{\rho}(\zeta)}{1+\hat{r}(\zeta)\hat{\rho}(\zeta)} }{R_{+}(\zeta)(\zeta-k)}\d \zeta + \int_{\Sigma_{-1}}\frac{-\log \frac{2\hat{\rho}(-\zeta)}{1+\hat{r}(-\zeta)\hat{\rho}(-\zeta)} }{R_{+}(\zeta)(\zeta-k)}\d \zeta +\int_{-a_1}^{a_1}\frac{\Delta}{R(\zeta)(\zeta-k)}\d\zeta \ri)\ri),
\end{align}
where
\begin{align}
	\Delta&=\big[-\int_{\Sigma_{1}}\frac{\log \frac{2\hat{\rho}(\zeta)}{1+\hat{r}(\zeta)\hat{\rho}(\zeta)} }{R_{+}(\zeta)}\d \zeta + \int_{\Sigma_{-1}}\frac{\log \frac{2\hat{\rho}(-\zeta)}{1+\hat{r}(-\zeta)\hat{\rho}(-\zeta)} }{R_{+}(\zeta)}\d \zeta\big] \big[\int_{-a_1}^{a_1}\frac{1}{R(\zeta)}\d\zeta\big]^{-1},\nonumber\\
	&=\frac{b_1}{K(m)}\int\limits_{a_1}^{b_1}\frac{\log \frac{2\hat{\rho}(\zeta)}{1+\hat{r}(\zeta)\hat{\rho}(\zeta)} }{R_{+}(\zeta)}\d \zeta\in i\R.	
\end{align}
Because $\frac{2\hat{\rho}(\zeta)}{1+\hat{r}(\zeta)\hat{\rho}(\zeta)}>0$, we get $\Delta$ is pure imaginary.
 It is easy to check that $f(k)$ satisfies
 \begin{align}
 	&f_+(k)f_-(k)=\frac{2\hat{\rho}(k)}{1+\hat{r}(k)\hat{\rho}(k)},\ k\in \Sigma_1,\\
 	&f_+(k)f_-(k)=\le(\frac{2\hat{\rho}(-k)}{1+\hat{r}(-k)\hat{\rho}(-k)}\ri)^{-1}, \ k\in \Sigma_{-1},\\
 	&\frac{f_+(k)}{f_-(k)}=e^{\Delta},\ k\in [-a_1,a_1],\\
 	&f(k)=1+\mathcal{O}(\frac{1}{k}),\ k\to\infty.
 \end{align}
Next, we introduce a new matrix-valued function
\begin{align}\label{transform}
{S}(k)=Z(k)e^{tg(k)\sigma_3}{f(k)}^{\sigma_3},\ \sigma_3=\begin{pmatrix}
	1&0\\0&-1
\end{pmatrix}.
\end{align}
By using the functions $g(k)$ and  $f(k)$, the transformation \eqref{transform} results in a new RH problem for ${S}(k)$ as follows:
\begin{RHP} Find a $1\times2$ vector-valued function ${S}(k;x,t)$ with the following properties
\begin{enumerate}
\item ${S}(k;x,t)$ is analytic for $k\in \C\backslash ([-b_1, b_1] \cup\mathcal{C}_1\cup \mathcal{C}_{-1})$.
\item For $k\in i[-b_1, b_1]$, the boundary values ${S}_{\pm}(k)$ satisfy the following jump conditions
\begin{align}
{S}_+(k)={S}_-(k) V_{{S}}(k)
\end{align}
where
\begin{equation}
\begin{array}{l}
V_{{S}}(k)=
\begin{cases}
			\begin{pmatrix} {1}& 0\\   {-i  \frac{\hat{r}(k)\hat{\rho}(k)-1}{2\hat{\rho}(k)}f^2(k) e^{2 t(\hat{\theta}(k)+g(k))} }  & {1} \end{pmatrix},\ k\in \mathcal{C}_1,\\
			\begin{pmatrix} 1 &    i\frac{\hat{r}(-k)\hat{\rho}(-k)-1}{2\hat{\rho}(-k)}f^{-2}(k)e^{-2t(\hat{\theta}(k)+g(k))} \\ 0 & 1\end{pmatrix}, \ k\in\mathcal{C}_{-1},\\
\displaystyle \begin{pmatrix} 0 &  -i\\-i &0 \end{pmatrix},  k \in  \Sigma_{1},\\
\displaystyle \begin{pmatrix} 0 &  i\\i &0 \end{pmatrix},  k \in  \Sigma_{-1},\\
\displaystyle \begin{pmatrix} e^{t\Omega+\Delta } & 0\\0& e^{-t\Omega-\Delta} \end{pmatrix},  k \in  [-a_1,a_1].
\end{cases}
\end{array}
\end{equation}
\item
${S}(k) =  \begin{pmatrix}
	1&1\end{pmatrix}  + \mathcal{O}\le(\frac{1}{k}\ri), \qquad k \rightarrow \infty.$
\end{enumerate}
\end{RHP}
By Lemma \ref{lemma3.1}, when time tends to infinity, the jumps for $S(k)$ converges to the jumps for the following outer model problem $S^{\infty}(k)$  exponentially fast  outside small neighbourhoods of $\pm ia_1$ and $\pm ib_1$.
\begin{RHP}\label{RHP4}
		Find a $1\times 2$ vector-valued function $S^{\infty}(k;x,t)$ with the following properties
\begin{enumerate}
\item {}  $S^{\infty}(k;x,t)$ is analytic in $\mathbb{C}\backslash [-b_1,b_1]$.
\item {} For $k\in [-b_1,b_1]$, the boundary values $S^{\infty}_{\pm}(k)$ satisfy the following jump relation
\begin{align}
\label{Stinfinity1}
&S^{\infty}_+(k) = S^{\infty}_-(k)V_{S^{\infty}}(k), \nonumber\\
&V_{S^{\infty}}(k)=\begin{cases}
	\begin{pmatrix}0 & -i\\ -i & 0 \end{pmatrix}, k \in\Sigma_{1},  \\
\begin{pmatrix}0 & i\\ i & 0 \end{pmatrix}, k \in\Sigma_{-1},  \\
\begin{pmatrix} e^{t\Omega+\Delta} &0 \\ 0 & e^{-t\Omega-\Delta}\end{pmatrix},   k \in [-a_1,a_1].
\end{cases}
\end{align}
\item {}$S^{\infty}(k)= \begin{pmatrix}
	1&1\end{pmatrix} +\mathcal{O}\le(\frac{1}{k}\ri), k\to\infty.$
\end{enumerate}
\end{RHP}
Next, we introduce the Jacobi elliptic function
\begin{equation}\label{eq:theta3-def}
	\vartheta_3(z;\tau)
	=
	\sum_{n\in\mathbb Z}
	e^{\,2\pi i n z+\pi i n^2\tau},
	\qquad z\in\mathbb C,\quad \Im\tau>0.
\end{equation}

The Jacobi theta function $\vartheta_3(z;\tau)$ is even in $z$ and satisfies the
quasi-periodicity relation
\begin{equation}\label{eq:theta3-periodicity}
	\vartheta_3(z+h+k\tau;\tau)
	=
	e^{-\pi i k^2\tau-2\pi i k z}\,
	\vartheta_3(z;\tau),
	\qquad h,k\in\mathbb Z.
\end{equation}

We now construct a matrix-valued outer parametrix for the model problem arising in the
super-critical regime. In contrast with the vector RH problem for $S^\infty$, the matrix
formulation is invertible and is therefore more convenient for the subsequent small-norm
analysis of the error problem.

We seek a matrix function $P^\infty(k)$, analytic for
$k\in\mathbb C\setminus(-b_1,b_1)$, such that
\begin{equation}\label{eq:Pinf-RH}
	P^\infty_+(k)=P^\infty_-(k)
	\begin{cases}
		\begin{pmatrix}
			e^{t\Omega+\Delta} & 0\\
			0 & e^{-t\Omega-\Delta}
		\end{pmatrix},
		& k\in[-a_1,a_1],\\[2ex]
		\begin{pmatrix}
			0 & -i\\
			-i & 0
		\end{pmatrix},
		& k\in \Sigma_{1},\\[2ex]
		\begin{pmatrix}
			0 & i\\
			i & 0
		\end{pmatrix},
		& k\in \Sigma_{-1},
	\end{cases}
\end{equation}
and
\begin{equation}\label{eq:Pinf-norm}
	P^\infty(k)=I+\mathcal O\!\left(\frac1k\right),
	\qquad k\to\infty.
\end{equation}

To write the solution explicitly, we first define
\begin{equation}\label{eq:tau-def}
	\tau
	=
	\frac{i}{2}\,\frac{K\!\left(\sqrt{1-m^2}\right)}{K(m)},
	\qquad
	\Omega_0
	=
	-\frac{\pi i\,b_1}{K(m)}.
\end{equation}
We also set
\begin{equation}\label{eq:w-def}
	w(k)
	=
	\int_{b_1}^{k}\frac{\Omega_0}{R(\zeta)}\,
	\frac{d\zeta}{4\pi i},
\end{equation}
and we introduce the Abelian integral
\begin{equation}\label{eq:p-def}
	p(k)
	=
	\int_{b_1}^{k}\frac{Q_1(\zeta)}{R(\zeta)}\,d\zeta,
	\qquad
	\Omega_0=-2\,p_{+}(a_1).
\end{equation}

In particular, for $k\in(-a_1,a_1)$ one has
\begin{equation}\label{eq:p-jump}
	p_{+}(k)-p_{-}(k)=-\Omega_0.
\end{equation}

We define
\begin{equation}\label{eq:gamma}
	\gamma(k)
	=
	\left(\frac{k^2-a_1^2}{k^2-b_1^2}\right)^{1/4},
\end{equation}
with the branch chosen so that $\gamma(k)\to1$ as $k\to\infty$.

We now introduce the row vector
\begin{equation}\label{eq:Sinf-supercritical}
	S^\infty(k)
	=
	\gamma(k)\,
	\frac{\vartheta_3(0;2\tau)}
	{\vartheta_3\!\left(\dfrac{t\Omega+\Delta}{2\pi i};\,2\tau\right)}
	\left(
	\frac{
		\vartheta_3\!\left(
		2w(k)+\dfrac{t\Omega+\Delta}{2\pi i}-\dfrac12;\,2\tau
		\right)}
	{
		\vartheta_3\!\left(2w(k)-\dfrac12;\,2\tau\right)
	},
	\,
	\frac{
		\vartheta_3\!\left(
		-2w(k)+\dfrac{t\Omega+\Delta}{2\pi i}-\dfrac12;\,2\tau
		\right)}
	{
		\vartheta_3\!\left(-2w(k)-\dfrac12;\,2\tau\right)
	}
	\right).
\end{equation}
Write
\[
S^\infty(k)=\bigl(S_1^\infty(k),\,S_2^\infty(k)\bigr).
\]

To construct the matrix parametrix, we define the differential operators
\begin{align}
	\nabla_{\Omega_0}S_1^\infty(k)
	&:=
	\gamma(k)\,
	\frac{\vartheta_3(0;2\tau)}
	{\vartheta_3\!\left(2w(k)-\dfrac12;\,2\tau\right)}
	\frac{\Omega_0}{2\pi i}
	\left.
	\frac{d}{dz}
	\left[
	\frac{
		\vartheta_3\!\left(
		z+2w(k)+\dfrac{t\Omega+\Delta}{2\pi i}-\dfrac12;\,2\tau
		\right)}
	{
		\vartheta_3\!\left(
		z+\dfrac{t\Omega+\Delta}{2\pi i};\,2\tau
		\right)}
	\right]
	\right|_{z=0},
	\label{eq:nabla-S1}
	\\[1ex]
	\nabla_{\Omega_0}S_2^\infty(k)
	&:=
	\gamma(k)\,
	\frac{\vartheta_3(0;2\tau)}
	{\vartheta_3\!\left(-2w(k)-\dfrac12;\,2\tau\right)}
	\frac{\Omega_0}{2\pi i}
	\left.
	\frac{d}{dz}
	\left[
	\frac{
		\vartheta_3\!\left(
		z-2w(k)+\dfrac{t\Omega+\Delta}{2\pi i}-\dfrac12;\,2\tau
		\right)}
	{
		\vartheta_3\!\left(
		z+\dfrac{t\Omega+\Delta}{2\pi i};\,2\tau
		\right)}
	\right]
	\right|_{z=0}.
	\label{eq:nabla-S2}
\end{align}

With these definitions, the outer parametrix is given by
\begin{equation}\label{eq:Pinf-explicit}
	P^\infty(k)
	=
	\frac12
	\begin{pmatrix}
		\left(1+\dfrac{p(k)}{k}\right)S_1^\infty(k)
		+\dfrac1k\,\nabla_{\Omega_0}S_1^\infty(k)
		&
		\left(1-\dfrac{p(k)}{k}\right)S_2^\infty(k)
		+\dfrac1k\,\nabla_{\Omega_0}S_2^\infty(k)
		\\[2ex]
		\left(1-\dfrac{p(k)}{k}\right)S_1^\infty(k)
		-\dfrac1k\,\nabla_{\Omega_0}S_1^\infty(k)
		&
		\left(1+\dfrac{p(k)}{k}\right)S_2^\infty(k)
		-\dfrac1k\,\nabla_{\Omega_0}S_2^\infty(k)
	\end{pmatrix}.
\end{equation}
The local parametrices $P^{\pm a_1} $ and $P^{\pm b_1}$ in neighborhoods of the branch points $\pm a_1$ and $\pm b_1$ can be constructed in terms of modified Bessel functions. The
construction is standard and the details are omitted, see for example \cite{Girotti-1}.
We now combine the outer parametrix $P^\infty(k)$ and the local parametrices near the
endpoints $\pm a_1$ and $\pm b_1$ to construct a global approximation to the RH
problem for $S(k)$.

Let the global parametrix $P(k)$ be defined by
\begin{equation}\label{eq:P-global}
	P(k)=
	\begin{cases}
		P^\infty(k), & k\in \C \setminus \bigl(B_\rho(a_1)\cup B_\rho(-a_1)\cup B_\rho(b_1)\cup B_\rho(-b_1)\bigr),\\[1ex]
		P^{a_1}(k), & k\in B_\rho(a_1),\\
		P^{-a_1}(k), & k\in B_\rho(-a_1),\\
		P^{b_1}(k), & k\in B_\rho(b_1),\\
		P^{-b_1}(k), & k\in B_\rho(-b_1),
	\end{cases}
\end{equation}
where the small disc $B_\rho(a_1)=\{k\in \C| |k-a_1|<\rho\}$ centered at $a_1$ of radius $\rho$.
We then introduce the error vector
\begin{equation}\label{eq:E-def}
	\mathcal{E}(k)=S(k)P(k)^{-1}.
\end{equation}

By construction, there exists
a constant $c>0$ such that
\begin{equation}\label{eq:E-jumps}
	\mathcal{E}_+(k)=\mathcal{E}_-(k)
	\begin{cases}
		I+\mathcal O(e^{-ct}), & \text{on the upper and lower lenses, outside the discs},\\[1ex]
		I+\mathcal O(t^{-1}), & \text{on the circles around }\pm a_1,\ \pm b_1.
	\end{cases}
\end{equation}
Moreover,
\begin{equation}\label{eq:E-norm}
	\mathcal{E}(k)=
	\begin{pmatrix}
		1 & 1
	\end{pmatrix}
	+\mathcal O\!\left(\frac{1}{k}\right),
	\qquad k\to\infty.
\end{equation}

As in the analysis of \cite{Girotti-1},  the vector $\mathcal{E}(k)$ is analytic in a neighbourhood of
$k=0$. In addition, the jump matrices for $\mathcal{E}$ satisfy the symmetry
\begin{equation}\label{eq:E-jump-symmetry}
	V_\mathcal{E}(-k)=
	\sigma_1
	V_\mathcal{E}(k)
\sigma_1.
\end{equation}
Therefore, a standard small-norm RH argument implies that the error problem admits a
unique solution and that this solution satisfies the symmetry
$
	\mathcal{E}(-k)=\mathcal{E}(k)
	\sigma_1,
$
together with the large-$k$ expansion
\begin{equation}\label{eq:E-expansion}
	\mathcal{E}(k)=
	\begin{pmatrix}
		1 & 1
	\end{pmatrix}
	+\frac{\mathcal{E}_1(x,t)}{kt}
	+\mathcal O\!\left(\frac{1}{k^2}\right),
	\qquad k\to\infty,
\end{equation}
where $\mathcal{E}_1(x,t)$ and its $x$-derivatives remain bounded as $t\to+\infty$.

We now undo the sequence of transformations. Since
\[
Y(k)=Z(k)e^{-tg(k)\sigma_3}f(k)^{-\sigma_3}
=S(k)e^{-tg(k)\sigma_3}f(k)^{-\sigma_3},\ \text{for large $k$},
\]
we have \begin{align}
	Y(k)
	&=
	\left(
	\begin{pmatrix}
		1 & 1
	\end{pmatrix}
	+\frac{\mathcal{E}_1(x,t)}{kt}
	+\mathcal O\!\left(\frac{1}{k^2}\right)
	\right)
	P(k)e^{-tg(k)\sigma_3}f(k)^{-\sigma_3}
	\nonumber\\
	&=
	\left(
	S^\infty(k)+\frac{\mathcal{E}_1(x,t)}{kt}
	+\mathcal O\!\left(\frac{1}{k^2}\right)
	\right)
	e^{-tg(k)\sigma_3}f(k)^{-\sigma_3}, \ k\to\infty.
	\label{eq:Y-large-k}
\end{align}
Hence the first component satisfies
\begin{equation}\label{eq:Y1-large-k}
	Y_1(k)=
	\left(
	S_1^\infty(k)+\frac{\mathcal{E}_1(x,t)}{kt}
	+\mathcal O\!\left(\frac{1}{k^2}\right)
	\right)e^{-tg(k)}f(k)^{-1}.
\end{equation}
Recalling the reconstruction formula
\begin{equation}\label{eq:u-recovery}
	u(x,t)=2\frac{d}{dx}\left(\lim_{k\to\infty}k\bigl(Y_1(k;x,t)-1\bigr)\right).
\end{equation}
Then, from the definition of the scalar function \(f(k)\), we obtain
\begin{equation}\label{eq:f-large-k-supercritical}
	f(k)=1+\frac{f_1(a_1,b_1)}{k}+\mathcal O\!\left(\frac{1}{k^2}\right),
\end{equation}
where \(f_1(a_1,b_1)\) is independent of \(x\).

Next, using the \(x\)-derivative formula for \(tg'(k)\), we find
\begin{equation}\label{eq:exp-tg-supercritical}
	\frac{\partial}{\partial x}e^{-tg(k)}
	=
	-\frac{1}{k}
	\left[
	\frac{a_1^2+b_1^2}{2}
	+b_1^2\left(\frac{E(m)}{K(m)}-1\right)
	\right]
	+\mathcal O\!\left(\frac{1}{k^2}\right).
\end{equation}
 We now turn to the outer model solution \(S_1^\infty(k)\). Its expansion at infinity reads
 \begin{equation}\label{eq:S1-large-k-supercritical}
 	S_1^\infty(k)
 	=
 	1+\frac{b_1}{2K(m)}\frac{1}{k}
 	\left[
 	\left(
 	\log\vartheta_3\!\left(\frac{t\Omega+\Delta}{2\pi i};\,2\tau\right)
 	\right)'
 	-
 	\frac{\vartheta_3'(0;2\tau)}{\vartheta_3(0;2\tau)}
 	\right]
 	+\mathcal O\!\left(\frac{1}{k^2}\right),
 \end{equation}
 where  $'$ denotes differentiation with respect to the argument of the theta function.
 Using the fact
 that \(\Delta\) is independent of $x,t$,
 we arrive at
 \begin{equation}\label{eq:dS1dx-supercritical}
 	\frac{\partial}{\partial x}S_1^\infty(k)
 	=
 	-\frac{1}{k}
 	\left[
 	\frac{\partial^2}{\partial x^2}
 	\log\vartheta_3\!\left(\frac{t\Omega+\Delta}{2\pi i};\,2\tau\right)
 	+\mathcal O\!\left(\frac{1}{t}\right)
 	\right]
 	+\mathcal O\!\left(\frac{1}{k^2}\right).
 \end{equation}
 Combining \eqref{eq:f-large-k-supercritical}, \eqref{eq:exp-tg-supercritical},
 and \eqref{eq:dS1dx-supercritical}, and using that \(\mathcal{E}_1(x,t)\) together with its
 \(x\)-derivatives remains bounded, we obtain
 \begin{equation}\label{eq:u-theta-supercritical}
 	u(x,t)
 	=
 	b_1^2-a_1^2
 	-2b_1^2\frac{E(m)}{K(m)}
 	-2\frac{\partial^2}{\partial x^2}
 	\log \vartheta_3\!\left(
 	\frac{b_1}{2K(m)}
 	\bigl(x-2(a_1^2+b_1^2)t+\phi_{\mathrm{sup}}\bigr)
 	;\,2\tau
 	\right)
 	+\mathcal O(t^{-1}),
 \end{equation}
 where \(K(m)\) and \(E(m)\) denote the complete elliptic integrals of the first and
 second kinds, respectively, $m=\frac{a_1}{b_1}$, and  the phase shift is
 \begin{equation}
 	\phi_{\mathrm{sup}}
 	=
 	\frac{i}{\pi}\int_{a_1}^{b_1}
 	\frac{\log \dfrac{2\hat{\rho}(\zeta)}{1+\hat{r}(\zeta)\hat{\rho}(\zeta)}}{R_+(\zeta)}\,d\zeta.
 \end{equation}
 Finally, the equivalent form \eqref{eq:uasy-super} follows from the standard
 theta-function identity
 \begin{equation}\label{eq:theta-dn-identity-supercritical}
 	\frac{1}{4K^2(m)}
 	\frac{d^2}{dz^2}\log\vartheta_3(z;2\tau)
 	=
 	-\frac{E(m)}{K(m)}
 	+\operatorname{dn}^2\!\bigl(2K(m)z+K(m)\mid m\bigr).
 \end{equation}
\subsection{The Intermediate Region $\xi_{crit}<\xi<\xi_{*}$}
 For $\xi_{crit}<\xi<\xi_{*}$, by proposition \ref{prop:q-roots}, we know the phase function $g(k)+\hat{\theta}(k)$ has the stationary points at $\pm\tilde\alpha$. Based on the  factorization \eqref{factor11}, \eqref{factor12}, \eqref{factor21} and \eqref{factor22}, we first introduce the following transformation
  \begin{align}
  	\tilde{Z}(k)=Y(k)\begin{cases}
  		\begin{pmatrix} {1}& 0\\   {i  \frac{\hat{r}(k)\hat{\rho}(k)-1}{2\hat{\rho}(k)} e^{2 t\hat{\theta}(k)} }  & {1} \end{pmatrix},\ k\in \text{lens upper of $(a_1,\tilde{\alpha})$ },\\
  		\begin{pmatrix} {1}& 0\\   {-i  \frac{\hat{r}(k)\hat{\rho}(k)-1}{2\hat{\rho}(k)} e^{2 t\hat{\theta}(k)} }  & {1} \end{pmatrix},\ k\in \text{lens lower of $(a_1,\tilde{\alpha})$ },\\
  		\begin{pmatrix} 1 &   i\frac{\hat{r}(k)\hat{\rho}(k)-1}{2\hat{r}(k)}e^{-2t\hat{\theta}(k)} \\ 0 & 1\end{pmatrix},\  k\in \text{lens upper of $(\tilde{\alpha},b_1)$ },\\
  		\begin{pmatrix} 1 &   -i\frac{\hat{r}(k)\hat{\rho}(k)-1}{2\hat{r}(k)}e^{-2t\hat{\theta}(k)} \\ 0 & 1\end{pmatrix},\ k\in \text{lens lower of $(\tilde{\alpha},b_1)$ },\\
  		\begin{pmatrix} 1 &    -i\frac{\hat{r}(-k)\hat{\rho}(-k)-1}{2\hat{\rho}(-k)}e^{-2t\hat{\theta}(k)} \\ 0 & 1\end{pmatrix}, \ k\in \text{lens upper of $(-\tilde{\alpha},-a_1)$ },\\
  		\begin{pmatrix} 1 &    i\frac{\hat{r}(-k)\hat{\rho}(-k)-1}{2\hat{\rho}(-k)}e^{-2t\hat{\theta}(k)} \\ 0 & 1\end{pmatrix}, \ k\in \text{lens lower of $(-\tilde{\alpha},-a_1)$ },\\
  		\begin{pmatrix} {1}& 0\\  {-i  \frac{\hat{r}(-k)\hat{\rho}(-k)-1}{2\hat{r}(-k)} e^{2 t\hat{\theta}(k)} } & {1} \end{pmatrix},\ k\in \text{lens upper of $(-b_1,-\tilde{\alpha})$ },\\
  		\begin{pmatrix} {1}& 0\\   {i  \frac{\hat{r}(-k)\hat{\rho}(-k)-1}{2\hat{r}(-k)} e^{2 t\hat{\theta}(k)} }  & {1} \end{pmatrix}, \ k\in \text{lens lower of $(-b_1,-\tilde{\alpha})$ }.
  	\end{cases}
  \end{align}
  Here the lens contours are shown in Figure~\ref{openinglenses2}. Then $\tilde{Z}(k)$ satisfies the jump conditions:
  \begin{align}
  	\tilde{Z}_{+}(k)=\tilde{Z}_{-}(k)\begin{cases}
  		\begin{pmatrix} {1}& 0\\   {-i  \frac{\hat{r}(k)\hat{\rho}(k)-1}{2\hat{\rho}(k)} e^{2 t\hat{\theta}(k)} }  & {1} \end{pmatrix},\ k\in \tilde{\mathcal{C}}_2,\\
  		\begin{pmatrix} 1 &   -i\frac{\hat{r}(k)\hat{\rho}(k)-1}{2\hat{r}(k)}e^{-2t\hat{\theta}(k)} \\ 0 & 1\end{pmatrix},\ k\in \tilde{\mathcal{C}}_1,\\
  		\begin{pmatrix} 1 &    i\frac{\hat{r}(-k)\hat{\rho}(-k)-1}{2\hat{\rho}(-k)}e^{-2t\hat{\theta}(k)} \\ 0 & 1\end{pmatrix}, \ k\in\tilde{\mathcal{C}}_{-2},\\
  		\begin{pmatrix} {1}& 0\\   {i  \frac{\hat{r}(-k)\hat{\rho}(-k)-1}{2\hat{r}(-k)} e^{2 t\hat{\theta}(k)} }  & {1} \end{pmatrix}, \ k\in \tilde{\mathcal{C}}_{-1},\\
  		\begin{pmatrix} 0 &   {-i  \frac{2\hat{\rho}(k)}{1+\hat{r}(k)\hat{\rho}(k)}e^{-2 t\hat{\theta}(k)} }\\  -i  \le(\frac{2\hat{\rho}(k)}{1+\hat{r}(k)\hat{\rho}(k)}\ri)^{-1}e^{2 t\hat{\theta}(k)}  & 0 \end{pmatrix},\ k\in (a_1,\tilde{\alpha}),\\
  		\begin{pmatrix} 0 &  -i\le(\frac{2\hat{r}(k)}{1+\hat{r}(k)\hat{\rho}(k)}\ri)^{-1}e^{-2t\hat{\theta}(k)}\\   -i\frac{2\hat{r}(k)}{1+\hat{r}(k)\hat{\rho}(k)}e^{2t\hat{\theta}(k)} & 0\end{pmatrix},\ k\in (\tilde{\alpha},b_1),\\
  		\begin{pmatrix} 0 &  i\le(\frac{2\hat{\rho}(-k)}{1+\hat{r}(-k)\hat{\rho}(-k)}\ri)^{-1}e^{-2t\hat{\theta}(k)}\\   i\frac{2\hat{\rho}(-k)}{1+\hat{r}(-k)\hat{\rho}(-k)}e^{2t\hat{\theta}(k)} & 0\end{pmatrix},\ k\in (-\tilde{\alpha},-a_1),\\
  		\begin{pmatrix} 0 &   {i  \frac{2\hat{r}(-k)}{1+\hat{r}(-k)\hat{\rho}(-k)}e^{-2 t\hat{\theta}(k)} }\\  i  \le(\frac{2\hat{r}(-k)}{1+\hat{r}(-k)\hat{\rho}(-k)}\ri)^{-1}e^{2t\hat{\theta}(k)}  & 0 \end{pmatrix},\ k\in (-b_1,-\tilde{\alpha}).
  	\end{cases}
  \end{align}
  For later use, we introduce the scalar function $\tilde f(k)$ by
  \begin{align}\label{def-tildef}
  	\tilde f(k)
  	=\exp\Bigg\{\frac{R(k)}{2\pi i}\Bigg[
  	&\int_{a_1}^{\tilde\alpha}
  	\frac{\log \dfrac{2\hat\rho(\zeta)}{1+\hat r(\zeta)\hat\rho(\zeta)}}
  	{R_+(\zeta)(\zeta-k)}\,d\zeta
  	-
  	\int_{\tilde\alpha}^{b_1}
  	\frac{\log \dfrac{2\hat r(\zeta)}{1+\hat r(\zeta)\hat\rho(\zeta)}}
  	{R_+(\zeta)(\zeta-k)}\,d\zeta-
  	\int_{-\tilde\alpha}^{-a_1}
  	\frac{\log \dfrac{2\hat\rho(-\zeta)}{1+\hat r(-\zeta)\hat\rho(-\zeta)}}
  	{R_+(\zeta)(\zeta-k)}\,d\zeta
  	\nonumber\\
  	&
  	+
  	\int_{-b_1}^{-\tilde\alpha}
  	\frac{\log \dfrac{2\hat r(-\zeta)}{1+\hat r(-\zeta)\hat\rho(-\zeta)}}
  	{R_+(\zeta)(\zeta-k)}\,d\zeta
  	+
  	\int_{-a_1}^{a_1}
  	\frac{\tilde\Delta}{R(\zeta)(\zeta-k)}\,d\zeta
  	\Bigg]\Bigg\}.
  \end{align}
  Here the constant $\tilde\Delta$ is chosen so that $\tilde f(k)$ is normalized at infinity, namely
  \begin{align}\label{def-tDelta}
  	\tilde\Delta
  	=
  	\Bigg[
  	&\int_{\tilde\alpha}^{b_1}
  	\frac{\log \dfrac{2\hat r(\zeta)}{1+\hat r(\zeta)\hat\rho(\zeta)}}{R_+(\zeta)}\,d\zeta
  	-
  	\int_{a_1}^{\tilde\alpha}
  	\frac{\log \dfrac{2\hat\rho(\zeta)}{1+\hat r(\zeta)\hat\rho(\zeta)}}{R_+(\zeta)}\,d\zeta
  	\nonumber\\
  	&\quad
  	+
  	\int_{-\tilde\alpha}^{-a_1}
  	\frac{\log \dfrac{2\hat\rho(-\zeta)}{1+\hat r(-\zeta)\hat\rho(-\zeta)}}{R_+(\zeta)}\,d\zeta
  	-
  	\int_{-b_1}^{-\tilde\alpha}
  	\frac{\log \dfrac{2\hat r(-\zeta)}{1+\hat r(-\zeta)\hat\rho(-\zeta)}}{R_+(\zeta)}\,d\zeta
  	\Bigg]
  	\Bigg[
  	\int_{-a_1}^{a_1}\frac{d\zeta}{R(\zeta)}
  	\Bigg]^{-1}.
  \end{align}
  Using the symmetry of the integrands, this can be rewritten as
  \begin{align}\label{def-tDelta-simplified}
  	\tilde\Delta
  	=
  	\frac{b_1}{K(m)}
  	\left(
  	\int_{a_1}^{\tilde\alpha}
  	\frac{\log \dfrac{2\hat\rho(\zeta)}{1+\hat r(\zeta)\hat\rho(\zeta)}}{R_+(\zeta)}\,d\zeta
  	-
  	\int_{\tilde\alpha}^{b_1}
  	\frac{\log \dfrac{2\hat r(\zeta)}{1+\hat r(\zeta)\hat\rho(\zeta)}}{R_+(\zeta)}\,d\zeta
  	\right).
  \end{align}

  A direct computation shows that $\tilde f(k)$ satisfies the following jump relations:
  \begin{align}
  	\tilde f_+(k)\tilde f_-(k)
  	&=
  	\frac{2\hat\rho(k)}{1+\hat r(k)\hat\rho(k)},
  	&& k\in (a_1,\tilde\alpha),\\
  	\tilde f_+(k)\tilde f_-(k)
  	&=
  	\left(
  	\frac{2\hat r(k)}{1+\hat r(k)\hat\rho(k)}
  	\right)^{-1},
  	&& k\in (\tilde\alpha,b_1),\\
  	\tilde f_+(k)\tilde f_-(k)
  	&=
  	\left(
  	\frac{2\hat\rho(-k)}{1+\hat r(-k)\hat\rho(-k)}
  	\right)^{-1},
  	&& k\in (-\tilde\alpha,-a_1),\\
  	\tilde f_+(k)\tilde f_-(k)
  	&=
  	\frac{2\hat r(-k)}{1+\hat r(-k)\hat\rho(-k)},
  	&& k\in (-b_1,-\tilde\alpha),\\
  	\frac{\tilde f_+(k)}{\tilde f_-(k)}
  	&=
  	e^{\tilde\Delta},
  	&& k\in [-a_1,a_1],\\
  	\tilde f(k)
  	&=
  	1+\mathcal{O}\!\left(\frac1k\right),
  	&& k\to\infty.
  \end{align}
  \begin{figure}
  	\begin{tikzpicture}[
  		scale=1.15,
  		line cap=round,
  		line join=round,
  		>=Stealth,
  		curvearrow/.style={
  			thick,
  			postaction={decorate},
  			decoration={markings, mark=at position 0.58 with {\arrow{>}}}
  		}
  		]
  		
  		\def\xa{2.0}   
  		\def\xp{4.2}   
  		\def\xb{6.6}   
  		
  		\def\hs{0.82}  
  		\def\hb{1.15}  
  		
  		\coordinate (Lm) at (-\xb,0);   
  		\coordinate (Lp) at (-\xp,0);   
  		\coordinate (La) at (-\xa,0);   
  		
  		\coordinate (Ra) at (\xa,0);    
  		\coordinate (Rp) at (\xp,0);    
  		\coordinate (Rb) at (\xb,0);    
  		
  		\draw[thick] (Lm) -- (Lp) -- (La);
  		\draw[thick] (Ra) -- (Rp) -- (Rb);
  		
  		\draw[curvearrow] (Lm) .. controls (-5.95,\hb) and (-4.95,\hb) .. (Lp);
  		\draw[curvearrow] (Lm) .. controls (-5.95,-\hb) and (-4.95,-\hb) .. (Lp);
  		
  		\draw[curvearrow] (Lp) .. controls (-3.45,\hs) and (-2.65,\hs) .. (La);
  		\draw[curvearrow] (Lp) .. controls (-3.45,-\hs) and (-2.65,-\hs) .. (La);
  		
  		\draw[curvearrow] (Ra) .. controls (2.65,\hs) and (3.45,\hs) .. (Rp);
  		\draw[curvearrow] (Ra) .. controls (2.65,-\hs) and (3.45,-\hs) .. (Rp);
  		
  		\draw[curvearrow] (Rp) .. controls (4.95,\hb) and (5.95,\hb) .. (Rb);
  		\draw[curvearrow] (Rp) .. controls (4.95,-\hb) and (5.95,-\hb) .. (Rb);
  		
  		\node[above] at (-5.35,1.02) {$\tilde{\mathcal{C}}_{-1}$};
  		\node[above] at (-3.05,0.76) {$\tilde{\mathcal{C}}_{-2}$};
  		
  		\node at (-5.35,-0.28) {$\Sigma_{-1}$};
  		\node at (-3.05,-0.28) {$\Sigma_{-1}$};
  		
  		\node[left]  at (Lm) {$-b_1$};
  		\node[right] at (La) {$-a_1$};
  		\node[below] at (-4.2,-0.2) {$-\tilde{\alpha}$};
  		
  		\node[above] at (3.05,0.76) {$\tilde{\mathcal{C}}_{2}$};
  		\node[above] at (5.35,1.02) {$\tilde{\mathcal{C}}_{1}$};
  		
  		\node at (3.05,-0.28) {$\Sigma_{1}$};
  		\node at (5.35,-0.28) {$\Sigma_{1}$};
  		
  		\node[left]  at (Ra) {$a_1$};
  		\node[right] at (Rb) {$b_1$};
  		\node[below] at (4.2,-0.2) {$\tilde{\alpha}$};
  		
  	\end{tikzpicture}
  	\caption{Opening lenses  for $\xi\in (\xi_{crit},\xi_{*})$.}
  	\label{openinglenses2}
  \end{figure}
 Next, we introduce the transformation
 \begin{equation}\label{def-S-tilde}
 	\tilde S(k)=\tilde Z(k)e^{t g(k)\sigma_3}\tilde f(k)^{\sigma_3}.
 \end{equation}
 Then $\tilde S(k)$ satisfies the jump relation
 \begin{equation}\label{jump-S-tilde}
 	\tilde S_+(k)=\tilde S_-(k)\,\tilde V_S(k),
 \end{equation}
 where
 \begin{equation}\label{jump-matrix-S-tilde}
 	\tilde V_S(k)=
 	\begin{cases}
 		\begin{pmatrix}
 			1 & 0\\[1mm]
 			-\displaystyle i\,\frac{\hat r(k)\hat\rho(k)-1}{2\hat\rho(k)}
 			\,\tilde f(k)^2 e^{2t(\hat\theta(k)+g(k))} & 1
 		\end{pmatrix},
 		& k\in \tilde{\mathcal C}_2,
 		\\[5mm]
 		\begin{pmatrix}
 			1 &
 			-\displaystyle i\,\frac{\hat r(k)\hat\rho(k)-1}{2\hat r(k)}
 			\,\tilde f(k)^{-2} e^{-2t(\hat\theta(k)+g(k))}\\[1mm]
 			0 & 1
 		\end{pmatrix},
 		& k\in \tilde{\mathcal C}_1,
 		\\[5mm]
 		\begin{pmatrix}
 			1 &
 			\displaystyle i\,\frac{\hat r(-k)\hat\rho(-k)-1}{2\hat\rho(-k)}
 			\,\tilde f(k)^{-2} e^{-2t(\hat\theta(k)+g(k))}\\[1mm]
 			0 & 1
 		\end{pmatrix},
 		& k\in \tilde{\mathcal C}_{-2},
 		\\[5mm]
 		\begin{pmatrix}
 			1 & 0\\[1mm]
 			\displaystyle i\,\frac{\hat r(-k)\hat\rho(-k)-1}{2\hat r(-k)}
 			\,\tilde f(k)^2 e^{2t(\hat\theta(k)+g(k))} & 1
 		\end{pmatrix},
 		& k\in \tilde{\mathcal C}_{-1},
 		\\[5mm]
 		\begin{pmatrix}
 			0 & -i\\
 			-i & 0
 		\end{pmatrix},
 		& k\in \Sigma_1,
 		\\[5mm]
 		\begin{pmatrix}
 			0 & i\\
 			i & 0
 		\end{pmatrix},
 		& k\in \Sigma_{-1},
 		\\[5mm]
 		\begin{pmatrix}
 			e^{t\Omega+\tilde\Delta} & 0\\
 			0 & e^{-t\Omega-\tilde\Delta}
 		\end{pmatrix},
 		& k\in [-a_1,a_1].
 	\end{cases}
 \end{equation}

 By Lemma~\ref{lemma3.1}, the off-diagonal entries of the jump matrix on
 $
 \tilde{\mathcal C}_{\pm1}\cup \tilde{\mathcal C}_{\pm2}
$
 decay exponentially as $t\to+\infty$  outside of small neighborhoods of the endpoints 
$\pm a_1$, $\pm b_1$
 and of the stationary phase point $\pm \tilde{\alpha}$.
The analysis of local parametrix  near the points $\pm a_1$, $\pm b_1$
is very similar to the analysis done in \cite{Girotti-1}.
Since $\pm\tilde\alpha$ are the stationary points of $g(k)+\hat{\theta}(k)$, the local parametrices in neighborhoods of $\pm\tilde\alpha$ can be constructed in terms of parabolic cylinder functions.
\subsection{The local model near $\tilde{\alpha}$}
Let $D_{\epsilon}(\tilde{\alpha})$ denote the small disc centered at $\tilde{\alpha}$ of radius $\epsilon$. Let 
$$X_{1}\cup X_5=\tilde{\mathcal{C}}_1\cap D_{\epsilon}(\tilde{\alpha}),\ \ X_{2}\cup X_4=\tilde{\mathcal{C}}_2\cap D_{\epsilon}(\tilde{\alpha}),\ \ X_3\cup X_6=\Sigma_{1}\cap D_{\epsilon}(\tilde{\alpha}).$$
These curves divide the disk $D_{\epsilon}(\tilde{\alpha})$ into regions $\{R_j\}_1^6$, see Fig.~\ref{fig:disk-regions}. Define $g_{\tilde{\alpha}}(k)$ for $k$ near $\tilde{\alpha}$ by
\begin{align}
	g_{\tilde{\alpha}}(k)=\begin{cases}
		g(k)+\hat{\theta}(k)-g_{+}(\alpha)-\hat{\theta}(\tilde{\alpha}), \ k\in R_1\cup R_2\cup R_3,\\
		-(g(k)+\hat{\theta}(k)-g_{-}(\alpha)-\hat{\theta}(\tilde{\alpha})),\ k\in R_4\cup R_5 \cup R_6,
	\end{cases}
\end{align}
and define the function $M^{(1)}(k)$ for $k$ near $\tilde{\alpha}$ by
\begin{align}
	M^{(1)}(k)=\tilde{S}(k)\tilde{f}^{-\sigma_3}(k)e^{it(g_{+}(\tilde{\alpha})+\hat{\theta}(\tilde{\alpha}))\sigma_3},\ \ k\in D_{\epsilon}(\tilde{\alpha})\backslash (\bigcup\limits_{j=1}^{6}X_j).
\end{align}
Across the curves $\bigcup\limits_{j=1}^{6}X_j$, $M^{(1)}(k)$ satisfies the jump condition $M^{(1)}_{+}(k)=M^{(1)}_{-}(k)V^{(1)}$ where 
\begin{align}
	V^{(1)}=\begin{cases}
		\begin{pmatrix}
			1&-i\frac{\hat r(k)\hat\rho(k)-1}{2\hat r(k)}e^{-2tg_{\tilde{\alpha}}(k)}\\0&1
		\end{pmatrix},\ k\in X_1,\\
		\begin{pmatrix}
			1& -i\frac{\hat r(k)\hat\rho(k)-1}{2\hat r(k)}e^{2tg_{\tilde{\alpha}}(k)}\\ 0&1
		\end{pmatrix},\ k\in X_5,\\
		\begin{pmatrix}
			1&0\\ -i\frac{\hat r(k)\hat\rho(k)-1}{2\hat \rho(k)}e^{2tg_{\tilde{\alpha}}(k)}&1
		\end{pmatrix},\ k\in X_2,\\
		\begin{pmatrix}
			1&0\\ -i\frac{\hat r(k)\hat\rho(k)-1}{2\hat \rho(k)}e^{-2tg_{\tilde{\alpha}}(k)}&1
		\end{pmatrix},\ k\in X_4,\\
		\begin{pmatrix}
			0&-i \frac{2\hat{\rho}(k)}{1+\hat r(k)\hat\rho(k)}\\
			-i (\frac{2\hat{\rho}(k)}{1+\hat r(k)\hat\rho(k)})^{-1} &0
		\end{pmatrix},\ k\in X_3,\\
		\begin{pmatrix}
			0&-i (\frac{2\hat{r}(k)}{1+\hat r(k)\hat\rho(k)})^{-1}\\
			-i \frac{2\hat{r}(k)}{1+\hat r(k)\hat\rho(k)}
			 &0
		\end{pmatrix},\ k\in X_6,
	\end{cases}
\end{align}
where all contours are oriented right as in Fig.~\ref{fig:disk-regions}.
Then we introduce the complex-valued function $\delta(k)$ by
\begin{align}
	\delta(k)=\exp(\frac{1}{2\pi i}(\int_{a_1}^{\tilde{\alpha}}\ln(\frac{2\hat{\rho}(s)}{1+\hat r(s)\hat\rho(s)})\frac{1}{s-k}\d s-\int_{\tilde{\alpha}}^{b_1}\ln(\frac{2\hat{r}(s)}{1+\hat r(s)\hat\rho(s)})\frac{1}{s-k}\d s)).
\end{align}
Since $\ln (\frac{2\hat{\rho}(\tilde{\alpha})}{1+\hat r(\tilde{\alpha})\hat\rho(\tilde{\alpha})})>0$ and $\ln (\frac{2\hat{r}(\tilde{\alpha})}{1+\hat r(\tilde{\alpha})\hat\rho(\tilde{\alpha})})>0$, by Lemma C.1 of \cite{Boutet4}, we know that $\delta(k)$ is bounded and analytic for $k\in D_{\epsilon}(\tilde{\alpha})\backslash (a_1,b_1)$.
A direct computation shows that 
\begin{align}
	\frac{\delta_{+}(k)}{\delta_{-}(k)}=\begin{cases}
		\frac{2\hat{\rho}(k)}{1+\hat r(k)\hat\rho(k)},\ k\in (a_1,\tilde{\alpha}),\\
		(\frac{2\hat{r}(k)}{1+\hat r(k)\hat\rho(k)})^{-1},\ k\in (\tilde{\alpha},b_1).
	\end{cases}
\end{align}
Then we define the piecewise constant functions $A(k)$ and $B(k)$ by
\begin{align}
	A(k)=\begin{cases}
		\delta^{\sigma_3}(k),\ k\in R_1\cup R_2 \cup R_3,\\
		\delta^{-\sigma_3}(k),\ k\in R_4\cup R_5 \cup R_6,
	\end{cases},\\
	B(k)=\begin{cases}
		\begin{pmatrix}
			0&i\\i&0
		\end{pmatrix},\ k\in R_1\cup R_2 \cup R_3,\\
		I,\ k\in R_4\cup R_5 \cup R_6,
	\end{cases}.
\end{align}
\begin{figure}[htbp]
	\centering
	\begin{tikzpicture}[scale=1.1,>=stealth,line cap=round,line join=round]
		
		\def\r{3}
		
		\coordinate (O) at (0,0);
		
		\coordinate (X1p) at (50:\r);
		\coordinate (X2p) at (130:\r);
		\coordinate (X3p) at (180:\r);
		\coordinate (X4p) at (230:\r);
		\coordinate (X5p) at (320:\r);
		\coordinate (X6p) at (0:\r);
		
		\draw[thick] (O) circle (\r);
		
		\draw[thick] (O) -- (X1p);
		\draw[thick] (O) -- (X2p);
		\draw[thick] (O) -- (X3p);
		\draw[thick] (O) -- (X4p);
		\draw[thick] (O) -- (X5p);
		\draw[thick] (O) -- (X6p);
		
		\draw[->,thick] ($(O)!0.35!(X1p)$) -- ($(O)!0.68!(X1p)$);
		\draw[->,thick] ($(O)!0.35!(X5p)$) -- ($(O)!0.68!(X5p)$);
		\draw[->,thick] ($(O)!0.35!(X6p)$) -- ($(O)!0.68!(X6p)$);
		
		\draw[->,thick] ($(O)!0.68!(X2p)$) -- ($(O)!0.35!(X2p)$);
		\draw[->,thick] ($(O)!0.68!(X3p)$) -- ($(O)!0.35!(X3p)$);
		\draw[->,thick] ($(O)!0.68!(X4p)$) -- ($(O)!0.35!(X4p)$);
		
		\node[above right] at ($(X1p)+(0.25,0.25)$) {$X_1$};
		\node[above left]  at ($(X2p)+(-0.25,0.25)$) {$X_2$};
		\node[left]        at ($(X3p)+(-0.55,0)$) {$X_3$};
		\node[below left]  at ($(X4p)+(-0.25,-0.25)$) {$X_4$};
		\node[below right] at ($(X5p)+(0.25,-0.25)$) {$X_5$};
		\node[right]       at ($(X6p)+(0.55,0)$) {$X_6$};
		
		\node at (25:1.9)  {$R_1$};
		\node at (90:1.8)  {$R_2$};
		\node at (155:1.7) {$R_3$};
		\node at (205:1.7) {$R_4$};
		\node at (270:1.8) {$R_5$};
		\node at (335:1.8) {$R_6$};
		
	\end{tikzpicture}
	\caption{The contour in the disk $D_{\epsilon}(\tilde{\alpha})$ and the sets $\{R_j\}_{1}^{6}$}
	\label{fig:disk-regions}
\end{figure}
Define a new vector-valued function $M^{(2)}(k)$ by
\begin{align}
	M^{(2)}(k)=M^{(1)}(k)A(k)B(k),
\end{align}
then the jumps on the contours $X_3$ and $X_6$ are eliminated, we obtain $M^{(2)}(k)$ has the jump condition
\begin{align}
	M_{+}^{(2)}(k)=M^{(2)}_{-}(k)\begin{cases}
	\begin{pmatrix}
		1&0\\
		-i\frac{\hat r(k)\hat\rho(k)-1}{2\hat r(k)}\delta^{-2}(k)e^{-2tg_{\tilde{\alpha}}(k)}&1
	\end{pmatrix},\ k\in X_1,\\
	\begin{pmatrix}
		1& -i\frac{\hat r(k)\hat\rho(k)-1}{2\hat r(k)}\delta^{2}(k)e^{2tg_{\tilde{\alpha}}(k)}\\ 0&1
	\end{pmatrix},\ k\in X_5,\\
	\begin{pmatrix}
		1& -i\frac{\hat r(k)\hat\rho(k)-1}{2\hat \rho(k)}\delta^{2}(k)e^{2tg_{\tilde{\alpha}}(k)}\\
		0&1
	\end{pmatrix},\ k\in X_2,\\
	\begin{pmatrix}
		1&0\\ -i\frac{\hat r(k)\hat\rho(k)-1}{2\hat \rho(k)}\delta^{-2}(k) e^{-2tg_{\tilde{\alpha}}(k)}&1
	\end{pmatrix},\ k\in X_4,\\
	I,\ k\in X_3\cup X_6.
	\end{cases}
\end{align}
In order to relate $M^{(2)}$ to the model problem $m^{X}$ of Appendix A in \cite{Boutet4}, we make a local change of variables for $k$ near $\tilde{\alpha}$ and introduce the new variable $\zeta=\zeta(k)$ by
$$\zeta=\sqrt{itg_{\tilde{\alpha}+}(k)},$$
where the branch of the square root is fixed as follows: Since $g_{\tilde{\alpha}+}(k)$ has a double zero at $k=\tilde{\alpha}$, we can write $\zeta=\sqrt{t}(k-\tilde{\alpha})\psi_{\tilde{\alpha}}(k)$, where the function $\psi_{\tilde{\alpha}}(k)$ is analytic for $k\in D_{\epsilon}(\tilde{\alpha})$. From the definition \eqref{gprime} of $g(k)$, we get $\frac{\d^2 g_{\tilde{\alpha}+}(k)}{\d k^2}(\tilde{\alpha})\in -i \R_{+}$, hence we can fix the branch by requiring that $\psi_{\tilde{\alpha}}(\tilde{\alpha})>0$.

The map $k\to\zeta$ is a biholomorphism from the disk $D_{\epsilon}(\tilde{\alpha})$ onto a neighborhood of the origin, which maps $X_3$ and $X_6$ into $R_-$ and $R_+$, respectively. By deforming the contours $X_1, X_2, X_4$, and  $X_5$, we can assume that they are mapped into the straight rays of the complex $\zeta$-plane for which $\arg \zeta$ equals $\frac{\pi}{4}$, $\frac{3\pi}{4}$, $-\frac{3\pi}{4}$, and $-\frac{\pi}{4}$, respectively.

We next consider the behavior of $\delta(k)$ as $k$ approaches $\tilde{\alpha}$. Let $\ln _{a_1}(k-\tilde{\alpha})$ and $\ln _{b_1}(k-\tilde{\alpha})$ denote the function $\ln (k-\tilde{\alpha})$ with branch cut along $(a_1,\tilde{\alpha})$ and $(\tilde{\alpha},b_1)$, respectively. We also define two functions $L_{a_1}$ and $L_{b_1}$ by
\begin{align*}
	 L_{a_1}(s,k)=\ln (k-s),\ s\in (a_1,\tilde{\alpha}),\ \ k\in D_{\epsilon}(\tilde{\alpha})\backslash (a_1,\tilde{\alpha}),\\
	 L_{b_1}(s,k)=\ln (k-s),\ s\in (\tilde{\alpha},b_1),\ \ k\in D_{\epsilon}(\tilde{\alpha})\backslash (\tilde{\alpha},b_1),
\end{align*}
where the branches are fixed by requiring that:
\begin{itemize}
	\item $L_{a_1}(s,k)$ is a continuous function of $s\in (a_1,\tilde{\alpha})$ for each $k\in D_{\epsilon}(\tilde{\alpha})\backslash (a_1,\tilde{\alpha})$.
	\item $L_{b_1}(s,k)$ is a continuous function of $s\in (\tilde{\alpha},b_1)$ for each $k\in D_{\epsilon}(\tilde{\alpha})\backslash (\tilde{\alpha},b_1)$.
	\item For $s=\tilde{\alpha}$, we have $L_{a_1}(\tilde{\alpha},k)=\ln _{a_1}(k-\tilde{\alpha})$ and $L_{b_1}(\tilde{\alpha},k)=\ln_{b_1}(k-\tilde{\alpha})$.
\end{itemize}
Integrating by parts, we obtain
\begin{align*}
	\int_{a_1}^{\tilde{\alpha}}\ln(\frac{2\hat{\rho}(s)}{1+\hat r(s)\hat\rho(s)})\frac{1}{s-k}\d s=&\ln_{a_1}(k-\tilde{\alpha})\ln(\frac{2\hat{\rho}(\tilde{\alpha})}{1+\hat r(\tilde{\alpha})\hat\rho(\tilde{\alpha})})-L_{a_1}(a_1,k) \ln(\frac{2\hat{\rho}(a_1)}{1+\hat r(a_1)\hat\rho(a_1)})\\&-\int_{a_1}^{\tilde{\alpha}} L_{a_1}(s,k)\d \ln(\frac{2\hat{\rho}(s)}{1+\hat r(s)\hat\rho(s)}),
\end{align*}
and 
\begin{align*}
	\int_{\tilde{\alpha}}^{b_1}\ln(\frac{2\hat{r}(s)}{1+\hat r(s)\hat\rho(s)})\frac{1}{s-k}\d s=&L_{b_1}(b_1,k) \ln(\frac{2\hat{r}(b_1)}{1+\hat r(b_1)\hat\rho(b_1)})-\ln_{b_1}(k-\tilde{\alpha})\ln(\frac{2\hat{r}(\tilde{\alpha})}{1+\hat r(\tilde{\alpha})\hat\rho(\tilde{\alpha})})\\&-\int_{\tilde{\alpha}}^{b_1} L_{b_1}(s,k)\d \ln(\frac{2\hat{r}(s)}{1+\hat r(s)\hat\rho(s)}).
\end{align*}
Hence we can write 
\begin{align}
	\delta(k)=\exp (\frac{1}{2\pi i}(\ln_{a_1}(k-\tilde{\alpha})\ln(\frac{2\hat{\rho}(\tilde{\alpha})}{1+\hat r(\tilde{\alpha})\hat\rho(\tilde{\alpha})})+\ln_{b_1}(k-\tilde{\alpha})\ln(\frac{2\hat{r}(\tilde{\alpha})}{1+\hat r(\tilde{\alpha})\hat\rho(\tilde{\alpha})})+ \chi (k))),
\end{align}
where the function $\chi(k)$ is defined by
\begin{align}
	\chi(k)=&\frac{1}{2\pi i}(-L_{a_1}(a_1,k) \ln(\frac{2\hat{\rho}(a_1)}{1+\hat r(a_1)\hat\rho(a_1)})-\int_{a_1}^{\tilde{\alpha}} L_{a_1}(s,k)\d \ln(\frac{2\hat{\rho}(s)}{1+\hat r(s)\hat\rho(s)})\nonumber\\&-L_{b_1}(b_1,k) \ln(\frac{2\hat{r}(b_1)}{1+\hat r(b_1)\hat\rho(b_1)})+\int_{\tilde{\alpha}}^{b_1} L_{b_1}(s,k)\d \ln(\frac{2\hat{r}(s)}{1+\hat r(s)\hat\rho(s)})).
\end{align}
Let $\ln_{a}(k)$ denote the logarithm of $k$ with branch cut along $\arg k=a$, i.e., $\ln_{a} k=\ln |k|+i \arg k$ with $\arg k \in (a,a+2\pi]$. Then $\ln_{-\pi}(k)=\ln k$. Since the map $k\to \zeta$ takes $X_3$ into $R_-$ and $X_6$ into $R_+$, we have 
\begin{align*}
	\ln_{a_1}(k-\tilde{\alpha})=\ln_{a_1}\frac{\zeta}{\sqrt{t}\psi_{\tilde{\alpha}}(k)}=-\frac{\ln t}{2}+\ln \zeta-\ln \psi_{\tilde{\alpha}}(k),\ k\in D_{\epsilon}(\tilde{\alpha})\backslash (a_1,\tilde{\alpha}),\\
	\ln_{b_1}(k-\tilde{\alpha})=\ln_{b_1}\frac{\zeta}{\sqrt{t}\psi_{\tilde{\alpha}}(k)}=-\frac{\ln t}{2}+\ln_{0} \zeta-\ln \psi_{\tilde{\alpha}}(k),\ k\in D_{\epsilon}(\tilde{\alpha})\backslash (\tilde{\alpha},b_1).
\end{align*}
Define the function $$p(\zeta)=\exp (\frac{1}{2\pi i}(\ln(\frac{2\hat{\rho}(\tilde{\alpha})}{1+\hat r(\tilde{\alpha})\hat\rho(\tilde{\alpha})}) \ln \zeta + \ln(\frac{2\hat{r}(\tilde{\alpha})}{1+\hat r(\tilde{\alpha})\hat\rho(\tilde{\alpha})}) \ln_0\zeta)),\ \zeta\in \C\backslash \R.$$
Moreover, define the functions $\delta_0(t)$ and $\delta_1(k)$ by
\begin{align*}
	\delta_0(t)=\exp (\frac{1}{2\pi i}((\ln(\frac{2\hat{\rho}(\tilde{\alpha})}{1+\hat r(\tilde{\alpha})\hat\rho(\tilde{\alpha})})+\ln(\frac{2\hat{r}(\tilde{\alpha})}{1+\hat r(\tilde{\alpha})\hat\rho(\tilde{\alpha})}))(-\frac{1}{2}\ln t-\ln \psi_{\tilde{\alpha}}(\tilde{\alpha})))+\chi (\tilde{\alpha})),\ t>0,\\
	\delta_1(k)=\exp ( \frac{1}{2\pi i}(-(\ln(\frac{2\hat{\rho}(\tilde{\alpha})}{1+\hat r(\tilde{\alpha})\hat\rho(\tilde{\alpha})})+\ln(\frac{2\hat{r}(\tilde{\alpha})}{1+\hat r(\tilde{\alpha})\hat\rho(\tilde{\alpha})}))(\ln \psi_{\tilde{\alpha}}(k)-\ln \psi_{\tilde{\alpha}}(\tilde{\alpha})) )+ \chi(k)-\chi (\tilde{\alpha}) ).
\end{align*}
Hence $\delta(k)$ has the form
\begin{align}
	\delta(k)=p(\zeta(k))\delta_0(t)\delta_1(k).
\end{align} 
Define $M^{(3)}(k)$ by 
\begin{align*}
	M^{(3)}(\zeta(k))=M^{(2)}(k)\delta_0(t)^{\sigma_3}(\frac{\hat{r}(\tilde{\alpha})}{\hat{\rho}(\tilde{\alpha})})^{\frac{1}{4}\sigma_3},
\end{align*} 
then $M^{(3)}(k)$ satisfies the jump condition $M^{(3)}_+(k)=M_-^{(3)}(k)V^{(3)}$ where
\begin{align*}
	V^{(3)}=\begin{cases}
		\begin{pmatrix}
			1&0\\
			i\sqrt{\frac{\hat{r}(\tilde{\alpha})}{\hat{\rho}(\tilde{\alpha})}}\frac{\hat r(\zeta)\hat\rho(\zeta)-1}{2\hat r(\zeta)}\delta_1^{-2}(\zeta) p^{-2}(\zeta)e^{2i\zeta^2}&1
		\end{pmatrix},\ \zeta\in \arg \zeta=\frac{\pi}{4},\\
		\begin{pmatrix}
			1& i\sqrt{\frac{\hat{\rho}(\tilde{\alpha})}{\hat{r}(\tilde{\alpha})}}\frac{\hat r(\zeta)\hat\rho(\zeta)-1}{2\hat r(\zeta)}\delta_1^{2}(\zeta) p^{2}(\zeta)e^{-2i\zeta^2}\\ 0&1
		\end{pmatrix},\ \zeta\in \arg \zeta =-\frac{\pi}{4},\\
		\begin{pmatrix}
			1& -i\sqrt{\frac{\hat{\rho}(\tilde{\alpha})}{\hat{r}(\tilde{\alpha})}}\frac{\hat r(\zeta)\hat\rho(\zeta)-1}{2\hat \rho(\zeta)}\delta_1^{2}(\zeta) p^{2}(\zeta)e^{-2i\zeta^2}\\
			0&1
		\end{pmatrix},\ \zeta\in \arg \zeta=\frac{3\pi}{4},\\
		\begin{pmatrix}
			1&0\\ -i\sqrt{\frac{\hat{r}(\tilde{\alpha})}{\hat{\rho}(\tilde{\alpha})}}\frac{\hat r(\zeta)\hat\rho(\zeta)-1}{2\hat \rho(\zeta)}\delta_1^{-2}(\zeta) p^{-2}(\zeta) e^{2i\zeta^2}&1
		\end{pmatrix},\ \zeta\in \arg \zeta=-\frac{3\pi}{4},
	\end{cases}
\end{align*}
where all contours are oriented toward the origin. Define the function $\tilde{p}$ by
$$ \tilde{p}(\zeta)=\exp (i \nu \ln_{-\frac{\pi}{2}}(\zeta))\\
,$$
where the constant $\nu=\nu(\tilde{\alpha})$ is defined by
\begin{align}\label{nu}
	\nu=-\frac{1}{2\pi}\ln(\frac{2\hat{\rho}(\tilde{\alpha})}{1+\hat r(\tilde{\alpha})\hat\rho(\tilde{\alpha})})-\frac{1}{2\pi}\ln(\frac{2\hat{r}(\tilde{\alpha})}{1+\hat r(\tilde{\alpha})\hat\rho(\tilde{\alpha})})>0,
\end{align}
where we have used the fact $\frac{4\hat{\rho}(\tilde{\alpha})\hat{r}(\tilde{\alpha})}{(1+\hat r(\tilde{\alpha})\hat\rho(\tilde{\alpha}))^2}<1$.
Since 
\begin{align*}
	p(\zeta)=\tilde{p}(\zeta)\begin{cases}
		1,\ \arg \zeta \in (0,\pi),\\
		\frac{1+\hat r(\tilde{\alpha})\hat\rho(\tilde{\alpha})}{2\hat{\rho}(\tilde{\alpha})}, \ \arg \zeta \in (-\pi,-\frac{\pi}{2}),\\
		\frac{2\hat{r}(\tilde{\alpha})}{1+\hat r(\tilde{\alpha})\hat\rho(\tilde{\alpha})},\ \arg \zeta \in (-\frac{\pi}{2},0),
	\end{cases}
\end{align*}
we can write 
\begin{align*}
	V^{(3)}=\begin{cases}
		\begin{pmatrix}
			1&0\\
			i\sqrt{\frac{\hat{r}(\tilde{\alpha})}{\hat{\rho}(\tilde{\alpha})}}\frac{\hat r(\zeta)\hat\rho(\zeta)-1}{2\hat r(\zeta)}\delta_1^{-2}(\zeta) \tilde{p}^{-2}(\zeta)e^{2i\zeta^2}&1
		\end{pmatrix},\ \zeta\in \arg \zeta=\frac{\pi}{4},\\
		\begin{pmatrix}
			1& i\sqrt{\frac{\hat{\rho}(\tilde{\alpha})}{\hat{r}(\tilde{\alpha})}}\frac{\hat r(\zeta)\hat\rho(\zeta)-1}{2\hat r(\zeta)} (\frac{2\hat{r}(\tilde{\alpha})}{1+\hat r(\tilde{\alpha})\hat\rho(\tilde{\alpha})})^2\delta_1^{2}(\zeta) \tilde{p}^{2}(\zeta)e^{-2i\zeta^2}\\ 0&1
		\end{pmatrix},\ \zeta\in \arg \zeta =-\frac{\pi}{4},\\
		\begin{pmatrix}
			1& -i\sqrt{\frac{\hat{\rho}(\tilde{\alpha})}{\hat{r}(\tilde{\alpha})}}\frac{\hat r(\zeta)\hat\rho(\zeta)-1}{2\hat \rho(\zeta)}\delta_1^{2}(\zeta) \tilde{p}^{2}(\zeta)e^{-2i\zeta^2}\\
			0&1
		\end{pmatrix},\ \zeta\in \arg \zeta=\frac{3\pi}{4},\\
		\begin{pmatrix}
			1&0\\ -i\sqrt{\frac{\hat{r}(\tilde{\alpha})}{\hat{\rho}(\tilde{\alpha})}}\frac{\hat r(\zeta)\hat\rho(\zeta)-1}{2\hat \rho(\zeta)}\delta_1^{-2}(\zeta) (\frac{2\hat{\rho}(\tilde{\alpha})}{1+\hat r(\tilde{\alpha})\hat\rho(\tilde{\alpha})})^2 \tilde{p}^{-2}(\zeta) e^{2i\zeta^2}&1
		\end{pmatrix},\ \zeta\in \arg \zeta=-\frac{3\pi}{4}.
	\end{cases}
\end{align*}
Define $q(\tilde{\alpha})=i\frac{\hat{r}(\tilde{\alpha})\hat\rho(\tilde{\alpha})-1}{\sqrt{\hat{r}(\tilde{\alpha})\hat\rho(\tilde{\alpha})}}$. For a fixed $\zeta$, $i\sqrt{\frac{\hat{r}(\tilde{\alpha})}{\hat{\rho}(\tilde{\alpha})}}\frac{\hat r(\zeta)\hat\rho(\zeta)-1}{2\hat r(\zeta)}\to q(\tilde{\alpha})$ and $\delta_1^{2}(\zeta)\to 1$ as $t\to\infty$. This suggests that $V^{(3)}$ tends to the jump matrix $v^{X}$ defined in Appendix A of \cite{Boutet4} for large $t$. In conclusion, the local parametrix around the endpoint $k=\tilde{\alpha}$ is
\begin{align}
	P^{\tilde{\alpha}}(k)=Y_{\tilde{\alpha}}(k)m^{X}(q,\zeta(k))\delta_0(t)^{-\sigma_3}(\frac{\hat{r}(\tilde{\alpha})}{\hat{\rho}(\tilde{\alpha})})^{-\frac{1}{4}\sigma_3}B^{-1}(k)A^{-1}(k)e^{-it(g(\tilde{\alpha})+\hat{\theta}(\tilde{\alpha}))\sigma_3}\tilde{f}^{\sigma_3}(k),
\end{align}
where $m^{X}$ is the solution of the RH problem (A.2) of \cite{Boutet4} and $Y_{\tilde{\alpha}}(k)$ is a function which is analytic for $k\in D_{\epsilon}(\tilde{\alpha})$, which is determined by imposing that $P^{\tilde{\alpha}}(k)(P^{\infty}(k))^{-1}=I+\mathcal{O}(t^{-\frac{1}{2}}) $ on $\partial D_{\epsilon}(\tilde{\alpha})$ as $t\to\infty$. Therefore, we choose
\begin{align}
	Y_{\tilde{\alpha}}(k)=P^{\infty}(k)\tilde{f}^{-\sigma_3}(k)e^{it(g(\tilde{\alpha})+\hat{\theta}(\tilde{\alpha}))\sigma_3}A(k)B(k)\delta_0(t)^{\sigma_3}(\frac{\hat{r}(\tilde{\alpha})}{\hat{\rho}(\tilde{\alpha})})^{\frac{1}{4}\sigma_3}.
\end{align}
The functions $P^{\infty}(k)\tilde{f}^{-\sigma_3}(k)$ and $(A(k)B(k))^{-1}$ have the same jump across $X_3\cup X_6$, thus $Y_{\tilde{\alpha}}(k)$ is analytic in $D_{\epsilon}(\tilde{\alpha})$.

 Combining the local analysis with a standard small-norm argument (see example \cite{AE}), we arrive at the following asymptotic representation of the function $\tilde{S}(k)$
\begin{align}
	\tilde{S}(k)=\le(\begin{pmatrix} 1&1\end{pmatrix}+\frac{\tilde{\mathcal{E}}_1(x,t)}{k\sqrt{t}}+\mathcal{O}(k^{-2})\ri)P^{\infty}(k),\ k\to\infty,
\end{align} 
where $P^{\infty}(k)$ is given by \eqref{eq:Pinf-explicit} and  $\tilde{\mathcal{E}}_1(x,t)$ and its $x$-derivatives remain bounded as $t\to+\infty$.
Follow similar argument as section \ref{subsec3.1}, we obtain the large-time asymptotic formula \eqref{eq:uasy-mid}.

\section*{Acknowledgments}
This work is supported by the National Natural Science Foundation of China (Grant Nos. 12471234, 12271490, 12471240), the Excellent Youth Science Fund Project of Henan Province (Grant No. 242300421158) and Science Foundation of Henan Academy of Sciences (Grant No. 20252319002).

\section*{Conflict of interest}
The authors declare no conflicts of interest.

\section*{Data availability statement}
Data sharing is not applicable to this article as no new data were created
or analyzed during the current study.

	\end{document}